%

\newcounter{myctr}
\def\myitem{\refstepcounter{myctr}\bibfont\noindent\ifnum\themyctr>9\else\phantom{0}\fi\hangindent17pt\themyctr.\enskip}


\documentclass{ws-ijqi}
\usepackage{amsmath}
\usepackage{longtable,mathdots}
\usepackage{latexsym,enumerate,amssymb,amsbsy}
\usepackage{verbatim,color}
\usepackage{supertabular}
\usepackage{algorithm,algorithmic}
\usepackage{breakurl}
\usepackage{cite}
\hyphenation{op-tical net-works semi-conduc-tor}

\newtheorem{conjecture}[theorem]{Conjecture}
\numberwithin{equation}{section}

\newcommand{\C}{\mathbb C}
\newcommand{\F}{\mathbb{F}}

\newcommand{\GRS}{\mathcal{GRS}}
\newcommand{\etal}{\emph{et al.}}
\def\wt{\mathop{{\rm wt}}}

\def\E{\mathop{{\rm E}}}
\def\u{{\mathbf{u}}}
\def\al{{\boldsymbol{\alpha}}}
\def\v{{\mathbf{v}}}

\def\0{{\mathbf{0}}}

\begin{document}

\markboth{Ezerman, Jitman, Kiah, Ling}
{Pure Asymmetric Quantum MDS Codes from CSS Construction: A Complete Characterization}


\title{\bf{PURE ASYMMETRIC QUANTUM MDS CODES\\
FROM CSS CONSTRUCTION:\\
A COMPLETE CHARACTERIZATION}}

\author{MARTIANUS FREDERIC EZERMAN}

\address{Centre for Quantum Technologies, National University of Singapore\\
Block S15, 3 Science Drive 2, Singapore 117543, Republic of Singapore\\
cqtmfe@nus.edu.sg, frederic.ezerman@gmail.com}

\author{SOMPHONG JITMAN}

\address{Division of Mathematical Sciences, School of Physical and Mathematical Sciences,\\
Nanyang Technological University\\ 
21 Nanyang Link, Singapore 637371, Republic of Singapore\\
sjitman@ntu.edu.sg}

\author{KIAH HAN MAO}

\address{Division of Mathematical Sciences, School of Physical and Mathematical Sciences,\\
Nanyang Technological University\\ 
21 Nanyang Link, Singapore 637371, Republic of Singapore\\
kiah0001@ntu.edu.sg}

\author{SAN LING}

\address{Division of Mathematical Sciences, School of Physical and Mathematical Sciences,\\
Nanyang Technological University\\ 
21 Nanyang Link, Singapore 637371, Republic of Singapore\\
lingsan@ntu.edu.sg}

\maketitle
\begin{abstract}
Using the Calderbank-Shor-Steane (CSS) construction, pure $q$-ary asymmetric quantum error-correcting codes 
attaining the quantum Singleton bound are constructed. Such codes are called pure CSS asymmetric quantum maximum distance 
separable (AQMDS) codes. Assuming the validity of the classical MDS Conjecture, pure CSS AQMDS codes of all possible 
parameters are accounted for.
\end{abstract}

\keywords{asymmetric quantum codes, MDS codes, Singleton bound, Generalized Reed-Solomon codes, weight distribution}

\section{Introduction}\label{sec:intro}
The study of {\it asymmetric quantum codes (AQCs)} began  when it was argued that, in many qubit systems, 
phase-flips (or Z-errors) occur more frequently than bit-flips (or X-errors) do~\cite{ESCH07,IM07}. 
Steane first hinted the idea of adjusting the error-correction to the particular characteristics of the 
quantum channel in~\cite{Ste96} and later, Wang \etal~established a mathematical model of AQCs in the 
general qudit system in~\cite{WFLX09}.

To date, the only known  class of AQCs is given by the asymmetric version of the CSS construction~\cite{AA10,WFLX09}. 
In this paper, the CSS construction is used to derive a class of pure\footnote{Purity in the CSS case is defined in 
Theorem~\ref{thm:main}} AQCs  attaining the quantum analogue of the Singleton bound. We call such optimal 
codes {\it asymmetric quantum maximum distance separable (AQMDS)} codes and if the codes are derived 
from the CSS construction, we call them CSS AQMDS codes.

Thus far, the only known AQMDS codes are pure CSS AQMDS and many results concerning these codes had been discussed in~\cite{EJL2011}. 
This paper provides a complete treatment of such codes by solving the remaining open problems. This enables us to provide 
a complete characterization. To be precise, assuming the validity of the MDS conjecture, pure CSS AQMDS codes of all 
possible parameters are constructed.

The paper is organized as follows. In Section~\ref{sec:prelim}, we discuss some preliminary concepts and results. 
In Sections~\ref{sec:GRSuptoq} to~\ref{sec:n=q+2}, nested pairs of Generalized Reed-Solomon (GRS) codes and 
extended GRS codes are used to derive AQMDS codes of lengths up to $q+2$. Section~\ref{sec:dxequals2} presents 
an alternative view on the construction of AQMDS codes based on the weights of MDS codes. A summary is provided in Section~\ref{sec:summary}.

\section{Preliminaries}\label{sec:prelim}
\subsection{Classical linear MDS codes}
Let $q$ be a prime power and ${\F}_{q}$ the finite field having 
$q$ elements. A {\it linear $[n,k,d]_q$-code} $C$ is a $k$-dimensional ${\F}_{q}$-subspace of ${\F}_{q}^n$ with 
{\em minimum distance} $d:=\min \{ \wt(\v): \v \in C \setminus \{\0 \}\}$, where  $\wt(\v)$  denotes the 
{\it Hamming weight}  of $\v \in {\F}_{q}^{n}$. Given two distinct linear codes $C$ and $D$, $\wt(C \setminus D)$ denotes 
$\min \{ \wt(\u) :\u\in C \setminus D \}$. 
 
Every  $[n,k,d]_{q}$-code $C$ satisfies the Singleton bound 
$$ d\leq n-k+1 \text{,}$$
and $C$ is said to be \textit{maximum distance separable (MDS)} if $d=n-k+1$. Trivial families of MDS codes 
include the vector space ${\F}_{q}^{n}$, the codes equivalent to the $[n,1,n]_q$-repetition code and 
their duals $[n,n-1,2]_q$ for positive integers $n \geq 2$. 

MDS codes which are not equivalent to the trivial ones are said to be \textit{nontrivial}. 
Furthermore, we have the following conjecture which has been shown to be true when $q$ is prime in~\cite{Bal10}.

\begin{conjecture}[MDS Conjecture]\label{MDSconjecture}
If there is a nontrivial $[n,k,d]_{q}$-MDS code, then $n \leq q+1$, 
except when $q$ is even and $k=3$ or $k=q-1$ in which case $n \leq q+2$. 
\end{conjecture}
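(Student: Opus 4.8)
The plan is to translate the problem into finite geometry and then run the polynomial ``lemma of tangents'' method. A linear $[n,k,d]_q$-code with $d=n-k+1$ is, up to monomial equivalence, the same as an \emph{$n$-arc} in the projective space $\mathrm{PG}(k-1,q)$: a set of $n$ points, no $k$ of which lie on a common hyperplane (equivalently, the columns of a generator matrix, every $k$ of them linearly independent). Since the dual of an MDS code is again MDS, $k$ and $n-k$ may be treated symmetrically, and the exceptional value $k=q-1$ is precisely the dual of $k=3$. First I would clear the easy ranges: $k\in\{1,n-1,n\}$ forces a trivial code; $k=2$ (hence, dually, $k=n-2$) gives an $n$-arc inside $\mathrm{PG}(1,q)$, which has only $q+1$ points, so $n\le q+1$. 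It then remains to bound the size of an arc for $3\le k\le n-3$.

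Next comes the base case $k=3$, i.e.\ arcs in the plane $\mathrm{PG}(2,q)$. The goal there is Segre's theorem: the largest arc has $q+1$ points when $q$ is odd and $q+2$ points when $q$ is even. Its engine is Segre's \emph{lemma of tangents} --- for an arc $K$ and a point $P\in K$ there are exactly $q+2-|K|$ tangent lines through $P$, and forming the product of the linear forms that define these tangents, over the three vertices of a triangle of points of $K$, yields a polynomial identity. Evaluating it when $|K|=q+2$ and $q$ is odd forces a product of nonzero field elements to equal both $1$ and $-1$, a contradiction; when $q$ is even the same bookkeeping is consistent, and hyperovals realize the value $q+2$.

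For $k\ge 4$ one cannot simply induct by projection: projecting an $n$-arc of $\mathrm{PG}(k-1,q)$ from one of its own points gives only an $(n-1)$-arc of $\mathrm{PG}(k-2,q)$, so this loses a point and yields at best $n\le q+2$. Instead one needs the higher-dimensional form of the lemma of tangents: normalize $k$ of the arc's points to a standard frame, record the remaining $n-k$ points as tuples of field elements, encode the arc condition as the non-vanishing of a family of generalized Vandermonde (Cauchy-type) determinants built from those elements, and assemble from these determinants a single polynomial of tightly controlled degree that vanishes at all the elements in question. Comparing its number of roots with its degree then forces $n-k$, hence $n$, to be small --- \emph{provided} the leading coefficient, which is a product of binomial coefficients, does not vanish in $\F_q$.

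That proviso is exactly where the main difficulty sits, and it is the reason the statement remains a conjecture rather than a theorem. When $q=p$ is prime, Ball's argument in~\cite{Bal10} shows the relevant binomial coefficients are nonzero modulo $p$, the degree bound is attained, and one obtains $n\le q+1$ (the case $p=2$ being handled separately); this is the instance on which the present paper relies. For a general prime power $q=p^h$ with $h\ge 2$, however, that product of binomial coefficients can vanish modulo $p$, the lemma-of-tangents polynomial may then collapse to too low a degree, and the root count no longer closes the argument. Removing this degeneracy in full generality --- it is presently understood only in restricted ranges of $k$ relative to $\sqrt q$ --- is the crux, and it is what separates the proved cases from the open conjecture.
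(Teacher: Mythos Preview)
The statement you were asked to prove is not a theorem in the paper; it is explicitly labeled a \emph{conjecture} (the MDS Conjecture), and the paper offers no proof of it whatsoever. The paper merely records that the conjecture is known to hold when $q$ is prime, citing Ball~\cite{Bal10}, and then \emph{assumes} its validity throughout (see Proposition~\ref{prop:MDSlength}, Corollary~\ref{cor:nonexist}, and Theorem~\ref{thm:summary}). There is therefore no ``paper's own proof'' against which to compare your attempt.

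Your write-up is, in fact, not a proof either, and you say so yourself: after sketching the arc formulation, Segre's lemma of tangents for $k=3$, and the higher-dimensional polynomial method behind Ball's result, you correctly identify that the leading-coefficient obstruction (a product of binomial coefficients vanishing modulo $p$ when $q=p^h$ with $h\ge 2$) is precisely what keeps the general case open. So what you have produced is an accurate survey of the known partial results and the obstruction to a full proof --- valuable context, but not a proof, and not something the paper claims or attempts. No gap to name beyond the one you already name: the conjecture is open for non-prime $q$ outside restricted ranges of $k$, and neither you nor the paper closes it.
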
 

For $\u=(u_i)_{i=1}^n$ and $\v=(v_i)_{i=1}^n$,
$\langle \u, \v \rangle_{\E}:=\sum_{i=1}^{n} u_i v_i$ is the \textit{Euclidean inner product} of 
$\u$ and $\v$. With respect to this inner product, the \emph{dual} $C^{\perp}$ of $C$ is given by 
\[C^\perp:=\left\lbrace \u \in {\F}_q^n : \left\langle \u,\v\right\rangle_{\E} = 0 
\text{ for all } \v \in C \right\rbrace.\] It is well known that $\left(C^{\perp}\right)^{\perp} = C$ 
and that the dual of an MDS code is MDS.

Let ${\F}_{q}[X]_{k}$ denote the set of all polynomials of degree less than $k$ in ${\F}_{q}[X]$. 
The set $\{1,x,\ldots,x^{k-1}\}$ forms the standard basis for ${\F}_{q}[X]_{k}$ as a vector space over $\F_{q}$.

\subsection{CSS construction and AQMDS codes}
We begin with a formal definition of an AQC.
\begin{definition}\label{def:AQCs}
Let $d_{x}$ and $d_{z}$ be positive integers. A quantum code $Q$ in 
$V_{n}=({\C}^{q})^{\otimes n}$ with dimension $K \geq 1$ is called an 
\textit{asymmetric quantum code} with parameters 
$((n,K,d_{z}/d_{x}))_{q}$ or $[[n,k,d_{z}/d_{x}]]_{q}$, 
where $k=\log_{q}K$, 
if $Q$ detects $d_{x}-1$ qudits of bit-flips (or $X$-errors) and, 
at the same time, $d_{z}-1$ qudits of phase-flips (or $Z$-errors).
\end{definition}

The correspondence between pairs of classical linear codes and AQCs is given in~\cite{AA10,WFLX09}.
\begin{theorem}[Standard CSS Construction for AQC]\label{thm:main}
Let $C_{i}$ be linear codes with parameters $[n,k_{i},d_{i}]_{q}$ for $i=1,2$ 
with $C_{1}^{\perp}\subseteq C_{2}$. Let 
\begin{align}\label{eq:distances}
d_{z}&:=\max \{ \wt(C_{2} \setminus C_{1}^{\perp}), \wt (C_{1} \setminus C_{2}^{\perp}) \} \text{ and }\notag\\ 
d_{x}&:=\min \{ \wt(C_{2} \setminus C_{1}^{\perp}), \wt(C_{1} \setminus C_{2}^{\perp}) \}\text{.}
\end{align}
Then there exists an AQC $Q$ with parameters $[[n,k_{1}+k_{2}-n,d_{z}/d_{x}]]_{q}$. The code $Q$ is said to be 
\textit{pure} whenever $\{ d_{z},d_{x} \} = \{ d_{1},d_{2} \}$.
\end{theorem}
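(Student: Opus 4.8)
The plan is to realize $Q$ explicitly as a CSS-type code built from cosets, in the spirit of~\cite{AA10,WFLX09}, and then read off its three parameters. Since $(C^{\perp})^{\perp}=C$, the hypothesis $C_1^{\perp}\subseteq C_2$ is equivalent to $C_2^{\perp}\subseteq C_1$. First I would fix coset representatives $\u_1,\dots,\u_K$ of $C_1^{\perp}$ in $C_2$, where
\[
K=\frac{|C_2|}{|C_1^{\perp}|}=\frac{q^{k_2}}{q^{n-k_1}}=q^{k_1+k_2-n},
\]
and let $Q$ be the subspace of $V_n=(\C^{q})^{\otimes n}$ spanned by the states
\[
|\u_j\rangle:=\frac{1}{\sqrt{|C_1^{\perp}|}}\sum_{\c\in C_1^{\perp}}|\u_j+\c\rangle,\qquad j=1,\dots,K.
\]
Distinct cosets of $C_1^{\perp}$ are disjoint, so the $|\u_j\rangle$ are orthonormal and $Q$ has dimension $K$, i.e.\ $k=k_1+k_2-n$. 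This part is pure bookkeeping and already settles the length and the dimension.

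The substance is the error analysis, and the feature that makes the asymmetric construction work is that the bit-flip and the phase-flip weights can be controlled independently. Writing a general error as $X_{\mathbf{b}}Z_{\mathbf{a}}$ with $\mathbf{a},\mathbf{b}\in\F_q^{n}$, where $X_{\mathbf{b}}|\mathbf{x}\rangle=|\mathbf{x}+\mathbf{b}\rangle$ and $Z_{\mathbf{a}}|\mathbf{x}\rangle=\psi\bigl(\langle\mathbf{a},\mathbf{x}\rangle_{\E}\bigr)|\mathbf{x}\rangle$ for a fixed nontrivial additive character $\psi$ of $\F_q$, a direct computation --- using that $\sum_{\c\in C_1^{\perp}}\psi\bigl(\langle\mathbf{a},\c\rangle_{\E}\bigr)$ equals $|C_1^{\perp}|$ when $\mathbf{a}\in(C_1^{\perp})^{\perp}=C_1$ and vanishes otherwise --- gives
\[
\langle\u_i|\,X_{\mathbf{b}}Z_{\mathbf{a}}\,|\u_j\rangle=
\begin{cases}
\psi\bigl(\langle\mathbf{a},\u_j\rangle_{\E}\bigr) & \text{if }\mathbf{b}\in(\u_i-\u_j)+C_1^{\perp}\text{ and }\mathbf{a}\in C_1,\\
0 & \text{otherwise.}
\end{cases}
\]
The quantum error-detection criterion requires $\langle\u_i|E|\u_j\rangle=\lambda_E\,\delta_{ij}$ for a scalar $\lambda_E$ depending only on $E$. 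For $i\neq j$ the set $(\u_i-\u_j)+C_1^{\perp}$ is a nonzero coset of $C_1^{\perp}$ in $C_2$, hence is contained in $C_2\setminus C_1^{\perp}$, so all off-diagonal entries vanish once $\wt(\mathbf{b})<\wt(C_2\setminus C_1^{\perp})$. On the diagonal the entry is nonzero only if $\mathbf{b}\in C_1^{\perp}$ and $\mathbf{a}\in C_1$, and it is then independent of $j$ precisely when $\langle\mathbf{a},\cdot\rangle_{\E}$ is constant on $C_2$, that is $\mathbf{a}\in C_2^{\perp}$; so the diagonal part of the criterion can fail only when $\mathbf{a}\in C_1\setminus C_2^{\perp}$, which is excluded once $\wt(\mathbf{a})<\wt(C_1\setminus C_2^{\perp})$. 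Hence $Q$ detects every $X_{\mathbf{b}}Z_{\mathbf{a}}$ with $\wt(\mathbf{b})\le\wt(C_2\setminus C_1^{\perp})-1$ and $\wt(\mathbf{a})\le\wt(C_1\setminus C_2^{\perp})-1$: it detects that many bit-flips and, at the same time, that many phase-flips, as in Definition~\ref{def:AQCs}.

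It then remains only to match the two thresholds $\wt(C_2\setminus C_1^{\perp})$ and $\wt(C_1\setminus C_2^{\perp})$ to $d_z$ and $d_x$. Running the identical construction on the pair $(C_2,C_1)$ --- legitimate because $C_2^{\perp}\subseteq C_1$ --- produces a code of the same length and dimension but with the two thresholds interchanged, so by choosing whichever of the two pairs makes the phase-flip threshold the \emph{larger} of the two quantities one obtains exactly the $[[n,k_1+k_2-n,d_z/d_x]]_q$ code with $d_z,d_x$ as in~\eqref{eq:distances}. I expect the only real care to be in this last relabelling and in the degenerate branches of the detection criterion: a nonzero $\mathbf{b}\in C_1^{\perp}$, or an $\mathbf{a}\in C_2^{\perp}$, contributes $\lambda_E=1$ rather than $\lambda_E=0$, so such an $E$ is ``detected'' only in the harmless sense of acting as a scalar on $Q$, which is why the weight inequalities must be strict. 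Finally, ``pure'' is introduced here as the \emph{definition} $\{d_z,d_x\}=\{d_1,d_2\}$, so nothing needs to be proved there; one only observes that $C_2\setminus C_1^{\perp}\subseteq C_2\setminus\{\0\}$ and $C_1\setminus C_2^{\perp}\subseteq C_1\setminus\{\0\}$ always force $\wt(C_2\setminus C_1^{\perp})\ge d_2$ and $\wt(C_1\setminus C_2^{\perp})\ge d_1$, so that purity says precisely that the minimum-weight codewords of $C_2$ and of $C_1$ lie outside $C_1^{\perp}$ and $C_2^{\perp}$, respectively.
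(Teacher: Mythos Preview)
The paper does not actually prove Theorem~\ref{thm:main}; it is stated as a known result with a citation to~\cite{AA10,WFLX09} (``The correspondence between pairs of classical linear codes and AQCs is given in~\cite{AA10,WFLX09}''), and the last sentence is a definition rather than a claim. So there is no in-paper proof to compare against.

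Your proposal is a faithful reconstruction of the standard CSS argument from those references: build $Q$ from coset states of $C_1^{\perp}$ inside $C_2$, compute the matrix elements $\langle\u_i|X_{\mathbf b}Z_{\mathbf a}|\u_j\rangle$ via a character-sum identity, and read off the bit-flip and phase-flip detection thresholds from the Knill--Laflamme conditions. The dimension count and the off-diagonal analysis are correct. One small overstatement: you write that the diagonal entry is independent of $j$ ``precisely when'' $\mathbf a\in C_2^{\perp}$, but for non-prime $q$ the character $\psi$ has a nontrivial kernel, so constancy of $\psi(\langle\mathbf a,\u_j\rangle_{\E})$ in $j$ is in principle weaker than $\mathbf a\in C_2^{\perp}$. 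Fortunately only the direction you actually use is needed --- if $\mathbf a\in C_2^{\perp}$ then the diagonal is constant, hence failure of the diagonal condition forces $\mathbf a\in C_1\setminus C_2^{\perp}$ --- so the bound $\wt(\mathbf a)\ge \wt(C_1\setminus C_2^{\perp})$ still follows and the conclusion stands. The final swap of $(C_1,C_2)$ to place the larger threshold on $d_z$ is exactly the right way to match~\eqref{eq:distances}.
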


For a CSS AQC, the purity in Theorem~\ref{thm:main} is equivalent to the general definition given in \cite{WFLX09}.

Furthermore, any CSS $[[n,k,d_{z}/d_{x}]]_{q}$-AQC satisfies the following bound~\cite[Lem. 3.3]{SRK09},
\begin{equation}\label{eq:QSB}
k \leq n-d_{x}-d_{z}+2 \text{.}
\end{equation}
This bound is conjectured to hold for all AQCs. A quantum code is said to be
\textit{asymmetric quantum MDS (AQMDS)} if it attains the equality in (\ref{eq:QSB}).

For our construction, the following result  holds.
\begin{lemma}[{\cite[Cor. 2.5]{WFLX09}}]  \label{lem:AQMDS}
A pure CSS AQC is an asymmetric quantum MDS code if and only if both $C_{1}$ and $C_{2}$ in 
Theorem~\ref{thm:main} are (classical) MDS codes.
\end{lemma}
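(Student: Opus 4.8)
The plan is to reduce both the purity hypothesis and the AQMDS equality to a single linear identity among the parameters $n, k_1, k_2, d_1, d_2$, and then compare it with the Singleton bound applied to $C_1$ and $C_2$ separately. First, using $k = k_1 + k_2 - n$ from Theorem~\ref{thm:main}, the AQMDS equality $k = n - d_x - d_z + 2$ becomes $d_x + d_z = 2n - k_1 - k_2 + 2$. Second, purity of $Q$ means $\{d_x, d_z\} = \{d_1, d_2\}$, so $d_x + d_z = d_1 + d_2$ unconditionally. Combining the two, a pure CSS AQC $Q$ is AQMDS if and only if
\[
d_1 + d_2 = 2n - k_1 - k_2 + 2 .
\]

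For the forward direction I would invoke the Singleton bound on each constituent code, $d_1 \le n - k_1 + 1$ and $d_2 \le n - k_2 + 1$; summing these yields $d_1 + d_2 \le 2n - k_1 - k_2 + 2$, and the identity above holds precisely when \emph{both} bounds are tight, i.e.\ when $C_1$ and $C_2$ are both MDS. Since the dual of an MDS code is again MDS (recalled in Section~\ref{sec:prelim}), ``$C_1$ is MDS'' is equivalent to ``$C_1^\perp$ is MDS'', which gives the stated conclusion. Conversely, if $C_1^\perp$ and $C_2$ are MDS, then $C_1 = (C_1^\perp)^\perp$ is MDS as well, hence $d_1 = n - k_1 + 1$ and $d_2 = n - k_2 + 1$; their sum equals $2n - k_1 - k_2 + 2$, and by the displayed equivalence the pure code $Q$ meets~(\ref{eq:QSB}) with equality, i.e.\ it is AQMDS.

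I do not expect a genuine obstacle; the work is essentially bookkeeping. The two points that need a little care are (i) carrying out the substitution $k = k_1 + k_2 - n$ so that the AQMDS condition is recast purely as a statement about $d_1 + d_2$, and (ii) moving between ``$C_1$ MDS'' and ``$C_1^\perp$ MDS'' via the dual-of-MDS-is-MDS fact, so the conclusion is phrased in terms of $C_1^\perp$ and $C_2$ rather than $C_1$ and $C_2$. One should also briefly dispose of the degenerate case $C_1^\perp = C_2$, in which one of $C_2 \setminus C_1^\perp$, $C_1 \setminus C_2^\perp$ is empty; under the convention that $d_x, d_z$ are genuine positive integers this case is excluded (or settled by an immediate direct check) and does not affect the argument.
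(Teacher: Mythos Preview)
The paper does not supply its own proof of this lemma; it is stated as a citation to \cite[Cor.~2.5]{WFLX09} and used without further argument. Your proof is correct and is the standard one: rewrite the AQMDS equality $k = n - d_x - d_z + 2$ using $k = k_1 + k_2 - n$ and the purity condition $\{d_x,d_z\}=\{d_1,d_2\}$ to get $d_1 + d_2 = 2n - k_1 - k_2 + 2$, then observe that the sum of the two Singleton inequalities $d_i \le n - k_i + 1$ attains this value precisely when both are tight, and pass between $C_1$ and $C_1^\perp$ via the dual-of-MDS fact already recalled in Section~\ref{sec:prelim}. Your remark on the degenerate case $C_1^\perp = C_2$ is also in line with the paper's convention (adopted from \cite{CRSS98} in the proof of Proposition~\ref{prop:kis0}) that a $K=1$ code is taken to be pure with $\{d_z,d_x\}=\{d_1,d_2\}$.
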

This means that constructing a pure $q$-ary CSS AQMDS code of a specific set of parameters is equivalent 
to finding a suitable corresponding nested pair of classical ${\F}_{q}$-linear MDS codes.

Following Lemma~\ref{lem:AQMDS}, a CSS AQMDS code is said to be \emph{trivial} if both 
$C_{1}$ and $C_{2}$ are trivial MDS codes.

From Lemma~\ref{lem:AQMDS} and the MDS Conjecture, the following 
necessary condition for the existence of a nontrivial pure CSS AQMDS code is immediate.

\begin{proposition}\label{prop:MDSlength}
Assuming the validity of the MDS Conjecture, every nontrivial pure $q$-ary
CSS AQMDS code has length $ n \leq q+1$ if $q$ is odd and $n \leq q+2$ if $q$ is even.
\end{proposition}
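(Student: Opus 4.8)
The plan is to reduce the statement to a single application of the MDS Conjecture to one of the component classical codes. By Lemma~\ref{lem:AQMDS}, a pure CSS AQMDS code arises from a nested pair $C_1^{\perp}\subseteq C_2$ of classical $\F_q$-linear codes in which both $C_1^{\perp}$ and $C_2$ are MDS. Since the dual of an MDS code is again MDS, the code $C_1=(C_1^{\perp})^{\perp}$ is MDS as well; hence $C_1$, $C_1^{\perp}$, and $C_2$ are all MDS codes of the same length $n$, which is also the length of the quantum code $Q$.

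First I would unwind the definition of triviality: the AQMDS code is nontrivial exactly when $C_1$ and $C_2$ are not both trivial MDS codes. The elementary observation needed here is that the class of trivial MDS codes --- namely $\F_q^n$, the codes equivalent to the $[n,1,n]_q$-repetition code, and their duals $[n,n-1,2]_q$ --- is closed under taking duals. Therefore $C_1$ is trivial if and only if $C_1^{\perp}$ is trivial, and so a nontrivial AQMDS code forces at least one of the two length-$n$ MDS codes $C_1^{\perp}$, $C_2$ to be a \emph{nontrivial} MDS code.

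Finally I would apply Conjecture~\ref{MDSconjecture} to that nontrivial $[n,k,d]_q$-MDS code: when $q$ is odd the conjecture yields $n\leq q+1$ with no exceptional case, and when $q$ is even it yields at worst $n\leq q+2$. Since $Q$ has the same length $n$, the claimed bound follows.

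There is no real obstacle in this argument; it is pure bookkeeping built on Lemma~\ref{lem:AQMDS} and the MDS Conjecture. The only step that deserves an explicit (but short) justification is the closure of the trivial MDS family under duality, which is exactly what guarantees that a nontrivial quantum code cannot be assembled from two trivial classical pieces and hence must contain a genuine nontrivial MDS code of length $n$.
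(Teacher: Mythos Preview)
Your argument is correct and matches the paper, which states the proposition as immediate from Lemma~\ref{lem:AQMDS} and the MDS Conjecture without giving an explicit proof. The detour through $C_1^{\perp}$ and duality-closure is harmless but unnecessary: since $C_1$ is itself an MDS code of length $n$, you may apply the conjecture directly to whichever of $C_1$, $C_2$ is nontrivial.
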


Let $Q$ be an AQC with parameters $[[n,k,d_{z}/d_{x}]]_{q}$. We usually require $k>0$ 
(equivalently, $K=q^k>1$) or for error detection purposes, $d_{x} \geq 2$. However, for completeness, 
we state the results for the two cases:  first, when $d_{x}=1$ and second, when $k=0$.

\begin{proposition}\label{prop:dxis1}
Let $n, k$ be positive integers such that $k\le n-1$. A pure CSS AQMDS code with parameters $[[n,k,d_z/1]]_q$ 
where $d_z=n-k+1$ exists if and only if there exists an MDS code with parameters $[n,k,n-k+1]_q$.
\end{proposition}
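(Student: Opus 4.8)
The plan is to read off both implications directly from the CSS construction (Theorem~\ref{thm:main}) and the MDS criterion (Lemma~\ref{lem:AQMDS}), using the full space $\F_q^n$ as one of the two nested codes.

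For the ``if'' direction, given an $[n,k,n-k+1]_q$-MDS code $D$, I would apply Theorem~\ref{thm:main} with $C_1 := \F_q^n$ and $C_2 := D$. The nesting condition holds trivially since $C_1^\perp = \{\0\} \subseteq C_2$, and the resulting quantum dimension is $k_1 + k_2 - n = n + k - n = k$. For the distances, $\wt(C_2 \setminus C_1^\perp) = \wt(D \setminus \{\0\}) = n-k+1$, whereas $\wt(C_1 \setminus C_2^\perp) = 1$: indeed $D^\perp$ has dimension $n-k < n$, so it is a proper subspace of $\F_q^n$ and therefore cannot contain every vector of Hamming weight $1$. As $k \le n-1$ gives $n-k+1 \ge 2$, this yields $d_z = n-k+1$ and $d_x = 1$, so the code has parameters $[[n,k,(n-k+1)/1]]_q$, attains equality in (\ref{eq:QSB}) (both sides equal $k$), and is pure because $\{d_z,d_x\} = \{1,\,n-k+1\}$ coincides with $\{d_1,d_2\}$, where $d_1 = \wt(\F_q^n\setminus\{\0\}) = 1$ and $d_2 = n-k+1$.

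For the ``only if'' direction, suppose a pure CSS AQMDS code with parameters $[[n,k,(n-k+1)/1]]_q$ is obtained from codes $C_1,C_2$ as in Theorem~\ref{thm:main}. Purity forces $\{d_1,d_2\} = \{d_z,d_x\} = \{1,\,n-k+1\}$, a set of two distinct elements since $k \le n-1$. By Lemma~\ref{lem:AQMDS}, $C_1^\perp$ and $C_2$ are MDS, and hence so is $C_1 = (C_1^\perp)^\perp$; thus $d_i = n-k_i+1$ for $i=1,2$. Exactly one of $C_1,C_2$ then has minimum distance $1$ and so dimension $n$; by the symmetry of the construction we may assume $k_2 = n$. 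Then $k = k_1 + k_2 - n = k_1$ and $d_1 = n-k_1+1 = n-k+1$, so $C_1$ is an $[n,k,n-k+1]_q$-MDS code, as required.

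I do not anticipate a substantive obstacle: the proposition is essentially an unpacking of the definitions. The two places that need a little attention are the verification that $\wt(\F_q^n\setminus C_2^\perp) = 1$ in the construction — this is precisely what makes $d_x=1$ and guarantees purity — and the parameter bookkeeping in the converse; throughout, the hypothesis $k\le n-1$ is exactly what separates $d_z$ from $d_x$ and rules out degenerate situations.
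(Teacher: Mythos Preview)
Your argument is correct. For the ``if'' direction you do exactly what the paper does, only with the roles of $C_1$ and $C_2$ interchanged: the paper takes $C_1=C$ and $C_2=\F_q^n$, while you take $C_1=\F_q^n$ and $C_2=D$; since the CSS construction is symmetric under $C_1\leftrightarrow C_2$ (as $C_1^\perp\subseteq C_2$ is equivalent to $C_2^\perp\subseteq C_1$ and the definitions of $d_z,d_x$ are symmetric), this is the same proof. You also spell out the verification that $\wt(\F_q^n\setminus D^\perp)=1$ and the purity check, which the paper leaves implicit.

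The genuine difference is that the paper explicitly proves only the ``if'' direction (``We show only one direction''), presumably regarding the converse as routine, whereas you supply a full converse via Lemma~\ref{lem:AQMDS} and the purity condition. Your converse is sound: purity pins down $\{d_1,d_2\}=\{1,n-k+1\}$, Lemma~\ref{lem:AQMDS} makes both $C_1$ and $C_2$ MDS, and then the one with minimum distance $n-k+1$ is the required $[n,k,n-k+1]_q$-MDS code. The symmetry remark to normalize $k_2=n$ is legitimate for the reason above, though you could equally just say ``the code among $C_1,C_2$ with distance $n-k+1$ works'' and avoid the WLOG.
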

\begin{proof}
We show only one direction.
Let $C$ be an MDS code with parameters $[n,k,n-k+1]_{q}$. 
Apply Theorem~\ref{thm:main} with $C_{1}=C$ and $C_{2}={\F}_{q}^{n}$ to obtain the required AQMDS code.
\end{proof}

\begin{proposition}\label{prop:kis0}
Let $n,k$ be positive integers such that $k\le n-1$. A pure CSS AQMDS code with parameters $[[n,0,d_z/d_x]]_q$ 
where $\{d_z,d_x\}=\{n-k+1,k+1\}$ exists if and only if there exists an MDS code with parameters $[n,k,n-k+1]_q$.
\end{proposition}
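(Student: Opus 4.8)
The plan is to mirror the structure of the proof of Proposition~\ref{prop:dxis1}, establishing the two directions separately, and to lean on Lemma~\ref{lem:AQMDS} together with the fact that the dual of an MDS code is MDS. For the ``if'' direction, suppose $C$ is an MDS code with parameters $[n,k,n-k+1]_q$. Apply Theorem~\ref{thm:main} with the nested pair $C_1^{\perp} = C$ and $C_2 = C^{\perp}$; one must first check that this is a legitimate input, i.e. that $C_1^{\perp} \subseteq C_2$, equivalently $C \subseteq C^{\perp\perp} = C^{\perp}$ --- wait, that is not automatic. The correct choice is to take $C_1 = C^{\perp}$ and $C_2 = C^{\perp}$, so that $C_1^{\perp} = C$ and the nesting condition $C_1^{\perp} = C \subseteq C^{\perp} = C_2$ still need not hold in general. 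So instead the natural move is: set $C_2 = C^{\perp}$ (an $[n,n-k,k+1]_q$ MDS code) and $C_1 = {\F}_q^n$, giving $C_1^{\perp} = \{\0\} \subseteq C_2$ trivially; then $k_1 + k_2 - n = n + (n-k) - n = n-k$, which is wrong. The cleanest correct choice is $C_1 = C_2 = C^{\perp}$: then $C_1^{\perp} = C \subseteq C^{\perp} = C_2$ requires $C$ self-orthogonal, which we cannot assume.

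Because of this, the right approach is to take $C_1^{\perp}$ and $C_2$ to be a \emph{nested} MDS pair with the prescribed dimensions, and the honest statement is that we need an $[n,k,n-k+1]_q$ MDS code $C$ together with the observation that \emph{any} MDS code of dimension $k$ has an MDS subcode of every smaller dimension and is contained in an MDS code of every larger dimension up to $n$ --- but that monotone-nesting fact is exactly the kind of thing that is false in general for MDS codes. So the genuinely available construction is: given $C$ with parameters $[n,k,n-k+1]_q$, put $C_1^{\perp} = C$ and $C_2 = {\F}_q^n$. Then $C_1 = C^{\perp}$ has parameters $[n,n-k,k+1]_q$, the containment $C_1^{\perp} = C \subseteq {\F}_q^n = C_2$ holds, and $k_1 + k_2 - n = (n-k) + n - n = n - k$. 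That still gives $k = n-k$ only in a special case; the dimension of the resulting $Q$ is $0$ precisely when $n - k = 0$, not what we want either. The resolution, and the step I expect to be the main obstacle, is to realize that ``$[[n,0,d_z/d_x]]_q$'' forces $k_1 + k_2 - n = 0$, i.e. $k_1 + k_2 = n$, so $C_1^{\perp}$ and $C_2$ have complementary dimensions $k$ and $n-k$; and then $C_1^{\perp}$ being an MDS $[n,k]_q$ code means $C_1$ is MDS $[n,n-k]_q$, while $C_2$ is MDS $[n,n-k]_q$. So \emph{both} $C_1$ and $C_2$ should be taken to be one and the same MDS code of dimension $n-k$, namely $C^{\perp}$ where $C$ is our given $[n,k,n-k+1]_q$ code: set $C_1 = C_2 = C^{\perp}$. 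Then $C_1^{\perp} = C$, the nesting $C_1^{\perp} = C \subseteq C^{\perp} = C_2$ is automatic since $C_1 = C_2$ forces $C_1^{\perp} \subseteq C_2 \iff C_1^{\perp} \subseteq C_1^{\perp}$, which holds. Thus Theorem~\ref{thm:main} applies, yielding $Q$ with $k_1+k_2-n = 2(n-k)-n = n-2k$ --- still not $0$ unless $n = 2k$.

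I see the subtlety now: the clean fix is $C_1 = {\F}_q^n$ and $C_2 = C^{\perp}$ does not work, but $C_1 = C$ with $C$ self-dual-containing... Let me state the plan at the right level of abstraction and leave the bookkeeping to the write-up. The approach is: for ``if'', take an $[n,k,n-k+1]_q$ MDS code $C$; by duality $C^{\perp}$ is an $[n,n-k,k+1]_q$ MDS code; we need a nested pair $C_1^{\perp} \subseteq C_2$ with $\dim C_1^{\perp} + \dim C_2 = n$ and both MDS, with $\{\wt(C_2\setminus C_1^{\perp}), \wt(C_1\setminus C_2^{\perp})\}$ realizing $\{n-k+1, k+1\}$; the candidate is $C_1^{\perp} = C$, $C_2 = C$ (so $C_1 = C_2^{\perp} = C^{\perp}$), legitimate since $C \subseteq C$, with $k_1 + k_2 - n = (n-k) + k - n = 0$, purity giving $\{d_z,d_x\} = \{d_1,d_2\} = \{k+1, n-k+1\}$, and both $C_1^{\perp} = C$ and $C_2 = C$ MDS, so Lemma~\ref{lem:AQMDS} gives AQMDS. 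For the ``only if'' direction: a pure CSS AQMDS $[[n,0,d_z/d_x]]_q$ code comes, by Lemma~\ref{lem:AQMDS}, from a nested pair with $C_1^{\perp}$ and $C_2$ both MDS and $\dim C_1^{\perp} + \dim C_2 = n$; purity forces $\{d_z,d_x\} = \{d_1,d_2\}$, and matching against $\{n-k+1,k+1\}$ pins down $\dim C_2 \in \{k, n-k\}$, so either $C_2$ or $C_1^{\perp}$ is an $[n,k,n-k+1]_q$ MDS code, as required. The main obstacle, and the one point deserving care in the final write-up, is getting the dimension arithmetic and the identification of which of the two constituent codes plays the role of the claimed $[n,k]_q$ MDS code exactly right, since the $k$-versus-$(n-k)$ symmetry and the $d_z$-versus-$d_x$ symmetry interact.
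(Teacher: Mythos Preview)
After the long sequence of false starts, you do land on the paper's exact construction for the ``if'' direction: take $C_{1}^{\perp}=C_{2}=C$, so that $C_{1}=C^{\perp}$, the containment $C_{1}^{\perp}\subseteq C_{2}$ is trivially $C\subseteq C$, and $k_{1}+k_{2}-n=(n-k)+k-n=0$. So the approach is the same as the paper's.

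There is, however, one genuine subtlety you gloss over and the paper addresses explicitly. With $C_{1}^{\perp}=C_{2}$ we have $C_{2}\setminus C_{1}^{\perp}=\emptyset$ and $C_{1}\setminus C_{2}^{\perp}=\emptyset$, so the quantities $\wt(C_{2}\setminus C_{1}^{\perp})$ and $\wt(C_{1}\setminus C_{2}^{\perp})$ in \eqref{eq:distances} are undefined; you cannot then invoke ``purity giving $\{d_{z},d_{x}\}=\{d_{1},d_{2}\}$'' as a computation. The paper handles this by adopting the convention from \cite{CRSS98} that a quantum code with $K=1$ is declared pure, whence $\{d_{z},d_{x}\}=\{d_{1},d_{2}\}=\{k+1,n-k+1\}$ by fiat. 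Your write-up needs to make this convention explicit.

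For the ``only if'' direction (which the paper omits), your sketch is essentially right but the bookkeeping slip you anticipated does occur: from $k_{1}+k_{2}=n$ you get $\dim C_{1}^{\perp}=n-k_{1}=k_{2}=\dim C_{2}$, not $\dim C_{1}^{\perp}+\dim C_{2}=n$. Combined with the nesting $C_{1}^{\perp}\subseteq C_{2}$, this forces $C_{1}^{\perp}=C_{2}$; then purity and matching $\{d_{1},d_{2}\}=\{n-k+1,k+1\}$ give $k_{2}\in\{k,n-k\}$, so either $C_{2}$ or $C_{2}^{\perp}$ is the desired $[n,k,n-k+1]_{q}$ MDS code. (Your phrase ``either $C_{2}$ or $C_{1}^{\perp}$'' names the same code twice; you mean $C_{2}$ or its dual.)
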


\begin{proof}
Again, we show one direction.
Let $C$ be an MDS code with parameters $[n,k,n-k+1]_{q}$ and let $C_{1}^{\perp}=C_{2}=C$. Following \cite{CRSS98}, assume that a quantum 
code with $K=1$ is pure and hence, there exists an AQMDS 
with parameters $[[n,0,d_{z}/d_{x}]]_{q}$ where $\{ d_{z},d_{x} \} = \{ n-k+1, k+1\}$.
\end{proof}

In the subsequent sections, pure CSS AQMDS codes with $k\ge 1$ and $d_x\ge 2$ are studied.

\section{AQMDS Codes of length $n\leq q$}\label{sec:GRSuptoq}
Let us recall some basic results on GRS codes (see~\cite[Sect. 5.3]{HP03}). 
Choose $n$ distinct elements $\alpha_{1},\alpha_2,\ldots,\alpha_{n}$ in ${\F}_{q}$ 
and define $\al:=(\alpha_{1},\alpha_2,\ldots,\alpha_{n})$. 
Let $\v:=(v_{1},v_2,\ldots,v_{n})$, where $v_{1},v_2,\ldots v_n$ are nonzero elements in ${\F}_{q}$. 
Then, given $\al$ and $\v$, a GRS code of length $n \leq q$ and dimension $k \leq n$ is defined as 
\begin{equation*}
{\GRS}_{n,k}(\al,\v):=
\left\{(v_{1}f(\alpha_{1}),\ldots,v_{n}f(\alpha_{n})):  f(X) \in {\F}_{q}[X]_{k}\right\}\text{.}
\end{equation*}

Since ${\F}_{q}[X]_{k} \subset {\F}_{q}[X]_{k+1}$ 
for fixed $n,\v$, and $\al$, it follows immediately that
\begin{equation}
{\GRS}_{n,k}(\al,\v) \subset {\GRS}_{n,k+1}(\al,\v)\text{.}
\end{equation}

Based on the standard basis for $\F_{q}[X]_{k}$, a generator matrix $G$ for ${\GRS}_{n,k}(\al,\v)$ is given by
\begin{equation}\label{eq:G}
G=\left(
\begin{array}{*{12}{c}}
v_{1}            & v_{2}            & \ldots & v_{n}\\
v_{1}\alpha_{1}  & v_{2}\alpha_{2}  & \ldots & v_{n}\alpha_{n}\\
\vdots           & \vdots           & \ddots & \vdots \\
v_{1}\alpha_{1}^{k-1}  & v_{2}\alpha_{2}^{k-1}  & \ldots & v_{n}\alpha_{n}^{k-1}
\end{array}
\right)
\end{equation}
and ${\GRS}_{n,k}(\al,\v)$ is an MDS code with parameters $[n,k,n-k+1]_{q}$. 
Hence, the following result gives a construction of an AQMDS code of length $n\le q$.

\begin{theorem}\label{thm:nestedGRS}
Let $q \geq 3$. Let $n$, $k$ and $j$ be positive integers such that 
$n \leq q$, $k \leq n-2$ and $j \leq n-k-1$. Then there exists a nontrivial AQMDS 
code with parameters $[[n,j,d_{z}/d_{x}]]_{q}$ where $\{d_{z},d_{x}\} = \{n-k-j+1,k+1\}$.
\end{theorem}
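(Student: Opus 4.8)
The plan is to realize the desired AQMDS code as the output of the standard CSS construction (Theorem~\ref{thm:main}) applied to a carefully chosen nested pair of GRS codes, and then invoke Lemma~\ref{lem:AQMDS} together with purity to certify that it is AQMDS. Concretely, fix $n$ distinct evaluation points $\al=(\alpha_1,\ldots,\alpha_n)$ in $\F_q$ (possible since $n\le q$) and a vector $\v$ of nonzero entries. We want two nested codes $C_1^\perp\subseteq C_2$, both MDS, with $\dim C_2 = k+j$ and $\dim C_1^\perp = k$, so that the CSS code has dimension $(k+j)+(n-k)-n=j$ and pure distances $\{d_z,d_x\}=\{n-(k+j)+1,\,n-k+1-(n-k)\}$... one checks the two MDS minimum distances are $n-(k+j)+1$ and $(n-k)+1$; wait, we need $\{n-k-j+1,k+1\}$, so we should instead take $\dim C_2=k+j$ and $\dim C_1^\perp = n-k$, giving distances $n-(k+j)+1$ and $(n-k)$... let me recompute: an $[n,m]_q$ MDS code has distance $n-m+1$, so to get distances $n-k-j+1$ and $k+1$ we need one code of dimension $k+j$ and the other of dimension $n-k$. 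Then the CSS dimension is $(k+j)+(n-k)-n = j$, exactly as required.

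First I would set $C_2 := \GRS_{n,k+j}(\al,\v)$ and $C_1^\perp := \GRS_{n,n-k}(\al,\v)$. The hypotheses $k\le n-2$ and $j\le n-k-1$ guarantee $k+j\le n-1$ and $n-k\ge 2$, so both dimensions lie strictly between $0$ and $n$, and both codes are nontrivial MDS codes with the stated parameters. Next I must verify the nesting $C_1^\perp\subseteq C_2$, i.e.\ $\GRS_{n,n-k}(\al,\v)\subseteq\GRS_{n,k+j}(\al,\v)$; by the displayed monotonicity $\GRS_{n,m}(\al,\v)\subset\GRS_{n,m+1}(\al,\v)$ this holds precisely when $n-k\le k+j$, equivalently $j\ge n-2k$. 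This is \textbf{not} implied by the hypotheses in general (e.g.\ $k$ small, $j$ small), so a single nested GRS pair with the \emph{same} $(\al,\v)$ does not always work — this is the main obstacle, and I expect the real argument to split into cases or to use a different pairing.

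To get around this I would instead pair $C_1^\perp$ with the dual of a GRS code on the \emph{same} points, exploiting the standard fact that the Euclidean dual of $\GRS_{n,m}(\al,\v)$ is again a GRS code $\GRS_{n,n-m}(\al,\v')$ for a suitable $\v'$ depending only on $\al,\v$ (see~\cite[Sect.~5.3]{HP03}). So set $C_1 := \GRS_{n,n-k}(\al,\v)$ and $C_2 := \GRS_{n,k+j}(\al,\v)$ where $\v$ is chosen so that $C_1^\perp = \GRS_{n,k}(\al,\v)$ — or more symmetrically, pick $C_1^\perp=\GRS_{n,k}(\al,\v)$ directly and let $C_1$ be its dual, while $C_2=\GRS_{n,k+j}(\al,\v)$. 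Now the nesting $C_1^\perp=\GRS_{n,k}(\al,\v)\subseteq\GRS_{n,k+j}(\al,\v)=C_2$ is automatic because $k\le k+j$, using only $j\ge 1$ and $k+j\le n-1$, both of which follow from the hypotheses. Both $C_1^\perp$ (dimension $k$, hence the MDS code $[n,k,n-k+1]_q$) and $C_2$ (dimension $k+j$, hence $[n,k+j,n-k-j+1]_q$) are MDS, so by Lemma~\ref{lem:AQMDS} the resulting CSS code is AQMDS provided it is pure.

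It remains to confirm the parameters and purity. The CSS dimension is $k_1+k_2-n$ where $k_1=\dim C_1=n-k$ and $k_2=\dim C_2=k+j$, giving $(n-k)+(k+j)-n=j$, as wanted. For purity I need $\{d_z,d_x\}=\{d_1,d_2\}$ where $d_1=\dist(C_1)$ and $d_2=\dist(C_2)$; equivalently (by Theorem~\ref{thm:main} and the discussion after Lemma~\ref{lem:AQMDS}) it suffices that $\wt(C_2\setminus C_1^\perp)=d_2$ and $\wt(C_1\setminus C_2^\perp)=d_1$, which holds automatically here since the minimum-weight codewords of $C_2$ cannot all lie in the strictly smaller subcode $C_1^\perp$ (as $C_1^\perp$ is MDS of larger minimum distance, it contains no nonzero word of weight $d_2<d_1^\perp$... more carefully: any word of weight $d_2=n-k-j+1$ in $C_1^\perp$ would violate the Singleton bound for $C_1^\perp$ since $d_2< n-k+1$ only when $j>0$, wait $n-k-j+1< n-k+1$ iff $j>0$, true, but we need it to be \emph{below} the minimum distance $n-k+1$ of $C_1^\perp$, and indeed $n-k-j+1\le n-k+1-1< n-k+1$, so no such nonzero word exists). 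The symmetric statement for $C_1$ and $C_2^\perp$ follows the same way. Hence the code is pure with $d_1=n-k+1$ wait — $d_1=\dist(C_1)=\dist$ of an $[n,n-k]_q$ MDS code $=k+1$, and $d_2=n-k-j+1$, so $\{d_z,d_x\}=\{n-k-j+1,k+1\}$, exactly the claim. Since $k\ge 1$ and $k+j\le n-1$ both hold, $C_1^\perp$ and $C_2$ are nontrivial (neither is $\F_q^n$, the repetition code, or its dual for $q\ge 3$), so the AQMDS code is nontrivial. The only genuinely delicate point is the dual-of-GRS-is-GRS identity and keeping track of which dimension gives which distance; everything else is bookkeeping against the inequalities $1\le k$, $k\le n-2$, $1\le j\le n-k-1$, $n\le q$.
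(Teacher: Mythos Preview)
Your proposal, after the initial exploration, lands on exactly the paper's construction: take $C_1=(\GRS_{n,k}(\al,\v))^\perp$ and $C_2=\GRS_{n,k+j}(\al,\v)$, so that $C_1^\perp=\GRS_{n,k}(\al,\v)\subset\GRS_{n,k+j}(\al,\v)=C_2$ by the monotonicity of GRS codes, and then apply Theorem~\ref{thm:main}. Your extra verification of purity and of the parameter count is correct and simply spells out what the paper's one-line proof leaves implicit (via Lemma~\ref{lem:AQMDS}); the detour through the dual-of-GRS-is-GRS fact is unnecessary once you define $C_1$ as the dual rather than trying to realize it directly as a GRS code with the same~$\v$.
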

\begin{proof}Apply Theorem~\ref{thm:main} with $C_{1}^{\perp}=({\GRS}_{n,k}(\al,\v)) \subset C_{2}={\GRS}_{n,k+j}(\al,\v)$.
\end{proof}

\section{AQMDS Codes of Length $n=q+1$}\label{sec:n=q+1}
Let $\alpha_1,\alpha_2,\ldots,\alpha_q$ be distinct elements in ${\F}_q$ and 
$v_1,v_2,\ldots,v_{q+1}$ be nonzero 
elements in ${\F}_q$. Let $k\le q$ and consider the code $E$ given by
\begin{equation*}
E:= \left\{(v_{1}f(\alpha_{1}),\ldots,v_{q}f(\alpha_{q}),v_{q+1}f_{k-1}):f(X)=\sum^{k-1}_{i=0} f_{i}X^{i} \in {\F}_{q}[X]_{k}\right\}\text{.}
\end{equation*}
Let $\mathbf{x}=(0,\ldots,0,v_{q+1})$ and $G$ be as in (\ref{eq:G}) with $n=q$. 
Then $G_{E}:=\left( G|\mathbf{x}^{\mathrm{T}}\right)$ is a generator matrix of $E$. The code 
$E$ is an extended GRS code with parameters $[q+1,k,q-k+2]_{q}$ (see~\cite[Sect. 5.3]{HP03}).

Let $1\le r\le k-2$. Then there exists a monic irreducible polynomial $p(X) \in {\F}_{q}[X]$ 
of degree $k-r$ \cite[Cor. 2.11]{LN97}. By the choice of $p(X)$, observe that $p(\alpha_{i})\ne 0$ for all $i$. 
Hence, the matrix
\begin{equation}\label{eq:G_calC}
G_{C}=
{ 
\left(
\begin{array}{cccc}
v_{1} p(\alpha_{1})                                    & \ldots & v_{q} p(\alpha_{q}) & 0\\
v_{1} \alpha_{1} p(\alpha_{1})                         & \ldots & v_{q} \alpha_{q} p(\alpha_{q}) & 0\\
\vdots                                                 & \ddots & \vdots & \vdots \\
v_{1} \alpha_{1}^{r-2} p(\alpha_{1})                   & \ldots & v_{q} \alpha_{q}^{r-2} p(\alpha_{q}) & 0\\
v_{1} \alpha_{1}^{r-1} p(\alpha_{1})                   & \ldots & v_{q} \alpha_{q}^{r-1} p(\alpha_{q}) & v_{q+1}\\
\end{array}
\right)}
\end{equation}
is a generator matrix of a $[q+1,r,q-r+2]_{q}$-MDS code $C$.

Observe that, for all $g(X) \in {\F}_{q}[X]_{r}$, $p(X)g(X)$ is also a polynomial in ${\F}_{q}[X]_{k}$. 
Moreover, the coefficient of $X^{k-1}$ in $p(X)g(X)$ is given by the coefficient of $X^{r-1}$ in $g(X)$. 
Thus, $C \subset E$, leading to the following construction of AQMDS code of length $q+1$.

\begin{theorem}\label{thm:extGRS}
Let $q\ge 3$. Let  $j,k$ be positive integers such that $3\le k \le q$ and  $2 \le j \le k-1$. Then there exists an 
AQMDS code with parameters $[[q+1,j,d_{z}/d_{x}]]_{q}$ where $\{d_{z},d_{x}\}=\{q-k+2,k-j+1\}$.
\end{theorem}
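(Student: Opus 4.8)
The plan is to build the required AQMDS code by applying the CSS construction (Theorem~\ref{thm:main}) to the nested pair $C \subset E$ that has already been assembled in the preceding paragraphs, with a suitable choice of the parameter $r$ in the generator matrix $G_C$ of~(\ref{eq:G_calC}). Specifically, I would set $r := k - j + 1$. Since $2 \le j \le k-1$, this gives $2 \le r \le k-1$, so the hypothesis $1 \le r \le k-2$ needed to produce $C$ is satisfied (note $r \le k-1$ is slightly weaker than needed, so I should double-check the boundary $r = k-1$, i.e. $j = 2$; in that case the irreducible polynomial $p(X)$ has degree $k-r = 1$, which still exists, and $C$ has dimension $r = k-1$, parameters $[q+1,k-1,q-k+2]_q$, so the construction goes through). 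Then take $C_2 := E$, an extended GRS code with parameters $[q+1,k,q-k+2]_q$, and $C_1 := C^\perp$, so that $C_1^\perp = C$ has parameters $[q+1,r,q-r+2]_q$. The containment $C \subset E$, already established, is exactly the requirement $C_1^\perp \subseteq C_2$ of Theorem~\ref{thm:main}.

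Next I would verify the numerology of the output parameters. The CSS construction yields a code of length $n = q+1$ and dimension $k_1 + k_2 - n = \dim C_1 + \dim C_2 - (q+1)$. Since $\dim C_1 = (q+1) - \dim C = (q+1) - r$ and $\dim C_2 = k$, this is $(q+1) - r + k - (q+1) = k - r = j - 1$... wait, $k - r = k - (k-j+1) = j - 1$, which is off by one from the claimed $j$. So in fact I should instead set $r := k - j$, giving $k_1 + k_2 - n = k - r = j$ as required, and then check the hypothesis: $2 \le j \le k-1$ forces $1 \le r \le k-2$, which matches exactly the range for which $C$ was constructed. Good — this is the correct choice, and it is worth stating explicitly to avoid the off-by-one trap.

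With $r = k-j$ fixed, the final step is to identify the distances. Both $C_1^\perp = C$ and $C_2 = E$ are MDS, so by Lemma~\ref{lem:AQMDS} the resulting pure CSS AQC is automatically AQMDS, and by Theorem~\ref{thm:main} its parameters are $[[q+1, j, d_z/d_x]]_q$ with $\{d_z, d_x\} = \{d_1, d_2\}$ where $d_1 = \dist(C^\perp)$ and $d_2 = \dist(E)$. We have $d_2 = \dist(E) = q - k + 2$ directly. For $d_1$, since $C$ is MDS with parameters $[q+1, k-j, q-(k-j)+2]_q$, its dual $C^\perp$ is also MDS, of dimension $(q+1) - (k-j)$, hence has minimum distance $(q+1) - \bigl((q+1)-(k-j)\bigr) + 1 = k - j + 1$. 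Therefore $\{d_z, d_x\} = \{q-k+2, k-j+1\}$, exactly as claimed. One should also confirm the code is a genuine AQC in the intended regime (i.e.\ $j \ge 1$ and $d_x \ge 1$): here $j \ge 2 > 0$ and both $q-k+2$ and $k-j+1$ are at least $2$ under the stated constraints $k \le q$ and $j \le k-1$, so the code is nontrivial in the sense used in the paper.

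I expect no serious obstacle here: the pair $C \subset E$ and all the MDS verifications are already in place, so the proof is essentially a bookkeeping exercise in choosing $r = k-j$ and chasing the dimension and distance formulas through Theorem~\ref{thm:main} and Lemma~\ref{lem:AQMDS}. The only point demanding care — and the one most likely to trip up a hasty write-up — is the off-by-one in the dimension count, which pins down $r = k-j$ (rather than $k-j+1$) and simultaneously makes the parameter range $2 \le j \le k-1$ align perfectly with the admissibility range $1 \le r \le k-2$ for which the generator matrix $G_C$ was defined.
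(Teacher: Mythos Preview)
Your proposal is correct and follows exactly the paper's approach: set $r = k - j$, take $C_1 = C^\perp$ and $C_2 = E$, and apply Theorem~\ref{thm:main}. The paper's proof is a two-line statement of precisely this, while you have (after catching and fixing the off-by-one) spelled out the dimension and distance bookkeeping that the paper leaves implicit.
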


\begin{proof}
Let $r=k-j$. Apply Theorem~\ref{thm:main} with $C_{1}=C^{\perp}$ 
and $C_{2}=E$.
\end{proof}

Note that Theorem~\ref{thm:extGRS} gives AQMDS codes with parameters $[[q+1,j,d_{z}/d_{x}]]_{q}$ with $j\ge 2$. 
The next proposition gives the necessary and sufficient conditions for the existence of 
pure CSS AQMDS codes with $j=1$.

\begin{proposition}\label{prop:pns}
Let $n,k$ be positive integers such that $k \le n-1$. There exists a pair of nested MDS 
codes $C \subset C^\prime$ 
with parameters $[n,k,n-k+1]_{q}$ and $[n,k+1,n-k]_{q}$, respectively, if and only if 
there exists an MDS code with parameters $[n+1,k+1,n-k+1]_{q}$. 

Equivalently, there exists a pure CSS AQMDS code with parameters $[[n,1,d_{z}/d_{x}]]_{q}$ 
where $\{d_{z},d_{x}\}=\{n-k,k+1\}$ if and only if there exists an MDS code with parameters 
$[n+1,k+1,n-k+1]_{q}$.
\end{proposition}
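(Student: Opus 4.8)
The plan is to establish the first (code-theoretic) equivalence in Proposition~\ref{prop:pns}, since the ``equivalently'' restatement then follows by applying the standard CSS construction (Theorem~\ref{thm:main}) together with Lemma~\ref{lem:AQMDS}: given the nested pair $C \subset C'$, set $C_1 = C'^{\perp}$ and $C_2 = C'$—wait, we need $C_1^{\perp} \subseteq C_2$ with $C_1^{\perp}$ and $C_2$ both MDS, so take $C_1^{\perp} = C$ and $C_2 = C'$; then $Q$ has parameters $[[n,1,d_z/d_x]]_q$ with $\{d_z,d_x\} = \{n-k, k+1\}$ by purity, matching the stated parameters (note $C_1 = C^{\perp}$ has minimum distance $k+1$ and $C_2 = C'$ has minimum distance $n-k$). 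For the converse direction of the ``equivalently'' part, an $[[n,1,d_z/d_x]]_q$ pure CSS AQMDS code with $\{d_z,d_x\} = \{n-k,k+1\}$ arises from codes $C_1^{\perp} \subseteq C_2$ which, being pure CSS AQMDS, are MDS with the dimensions forced to be $k$ and $k+1$ (since $\dim C_2 - \dim C_1^{\perp} = k_1 + k_2 - n = 1$ and the distances pin down the dimensions), so $C := C_1^{\perp} \subset C_2 =: C'$ is the required nested MDS pair.

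So the mathematical content is the equivalence: a nested MDS pair $[n,k] \subset [n,k+1]$ exists over $\F_q$ if and only if an $[n+1, k+1, n-k+1]_q$ MDS code exists. For the forward direction, given $C \subset C'$ with the stated parameters, I would pick a nonzero vector in the dual situation—more concretely, consider a parity-check perspective: let $H'$ be a parity-check matrix for $C'$, an $(n-k-1) \times n$ matrix; since $C \subset C'$ has codimension $1$ inside $C'$, there is a linear functional $\lambda$ on $\F_q^n$ with $C = C' \cap \ker\lambda$, and $\lambda$ restricted to $C'$ is onto $\F_q$. Then the code $\widehat{C}$ of length $n+1$ obtained by appending to each codeword $\c \in C'$ the extra coordinate $\lambda(\c)$ is a $[n+1, k+1]_q$ code; I claim it is MDS. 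Its minimum distance: a nonzero codeword of $\widehat{C}$ with a zero in the last coordinate corresponds to a nonzero codeword of $C$, which has weight $\geq n-k+1$, hence weight $\geq n-k+1$ in $\widehat{C}$ too; a codeword with nonzero last coordinate comes from $\c \in C' \setminus C$, which has weight $\geq n-k$ in $C'$, hence weight $\geq n-k+1$ in $\widehat{C}$. Either way the weight is $\geq n-k+1 = (n+1) - (k+1) + 1$, so $\widehat{C}$ meets the Singleton bound and is MDS.

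For the reverse direction, given an $[n+1, k+1, n-k+1]_q$ MDS code $D$, I would shorten and puncture on the last coordinate: let $C := \{(c_1,\dots,c_n) : (c_1,\dots,c_n,0) \in D\}$ be the shortening (a $[n, k, \geq n-k+1]_q$ code, hence MDS of dimension exactly $k$ since shortening an MDS code at one position yields an MDS code), and let $C' := \{(c_1,\dots,c_n) : (c_1,\dots,c_n,c_{n+1}) \in D \text{ for some } c_{n+1}\}$ be the puncturing (a $[n, k+1, \geq n-k]_q$ code, hence MDS of dimension $k+1$, using that puncturing an MDS code of length $\geq 2$ at one position gives MDS). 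Clearly $C \subseteq C'$, and the dimensions differ by one, giving the required nested pair. The step deserving the most care is verifying that puncturing does not collapse the dimension—i.e. that $D$ has no codeword supported solely on the last coordinate, which holds because $n-k+1 \geq 2$ (using $k \leq n-1$)—and symmetrically that shortening gives dimension exactly $k$, which holds because $D$ has codewords with nonzero last coordinate (as $D$ is not contained in the hyperplane $c_{n+1}=0$, its length-$(n+1)$ dimension being $k+1 \geq 1$ and all coordinates of an MDS code being ``active''). I expect these dimension-count bookkeeping steps, rather than any deep idea, to be the only real obstacle; everything else is the standard interplay between MDS codes, shortening, puncturing, and one-coordinate extension.
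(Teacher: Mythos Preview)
Your proposal is correct and follows essentially the same route as the paper: the forward direction builds the $[n+1,k+1,n-k+1]_q$ code by adjoining one extra coordinate to $C'$ that distinguishes $C$ from $C'\setminus C$ (you phrase this via a linear functional $\lambda$ with $C=C'\cap\ker\lambda$, the paper via the explicit generator matrix $\left(\begin{smallmatrix}\0 & G\\ 1 & \mathbf{w}\end{smallmatrix}\right)$, but these are the same construction), and the reverse direction is exactly the paper's shortening/puncturing argument, with your dimension bookkeeping filling in what the paper leaves as ``a quick observation.''
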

\begin{proof}
Let $G$ be a generator matrix of $C$. Pick $\mathbf{w} \in C^\prime \setminus C$ and observe that 
$\left(\begin{array}{c} G \\ \hline \mathbf{w} \end{array}\right)$ is a generator matrix for $C^\prime$. 
It can be verified that
\begin{equation*}
\left(\begin{array}{c|c} \0 & G \\ \hline 1 & \mathbf{w} \end{array}\right)
\end{equation*}
is a generator matrix of an $[n+1,k+1,n-k+1]_q$-MDS code.

Conversely, let $D$ be an $[n+1,k+1,n-k+1]_q$-MDS code 
with $k \le n-1$. Shortening the code $D$ at the last coordinate 
yields an $[n,k,n-k+1]_{q}$-MDS code $C$. Puncturing the code $D$ at the last coordinate gives 
an $[n,k+1,n-k]_{q}$-MDS code $C^\prime$. A quick observation confirms that $C \subset C^\prime$. \end{proof}

This proposition leads to the following characterization.

\begin{corollary}\label{cor:nonexist}
Assuming the validity of the MDS conjecture, there exists 
a pure CSS AQMDS code with parameters $[[q+1,1,d_{z}/d_{x}]]_{q}$ 
if and only if $q$ is even and $\{d_{z},d_{x}\}=\{3,q-1\}$.
\end{corollary}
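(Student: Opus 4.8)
The plan is to combine Proposition~\ref{prop:pns} with the MDS Conjecture. By Proposition~\ref{prop:pns} applied with $n=q+1$, a pure CSS AQMDS code with parameters $[[q+1,1,d_z/d_x]]_q$ and $\{d_z,d_x\}=\{q+1-k,k+1\}$ exists if and only if there is an MDS code with parameters $[q+2,k+1,q+1-k]_q$. So the task reduces entirely to deciding, for which $k$ with $1\le k\le q$, a $[q+2,k+1,q+1-k]_q$-MDS code exists.

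First I would dispose of the trivial cases: an $[q+2,k+1,q+1-k]_q$-MDS code with $k+1\in\{1,2,q+1,q+2\}$ would be a trivial MDS code, but one checks these do not fit the length $q+2$ with the required dimension (e.g. the repetition code has length unrestricted but dimension $1$, giving $k=0$, which is excluded by $k\ge 1$ coming from $d_x\ge 2$; similarly the parity check code gives $k+1=q+1$, i.e.\ dimension $q+1$ at length $q+2$, which is the trivial $[q+2,q+1,2]_q$ code and forces $d_x$ or $d_z$ to be $1$, again excluded). Hence any relevant $[q+2,k+1,q+1-k]_q$-MDS code is nontrivial, and the MDS Conjecture (Conjecture~\ref{MDSconjecture}) applies with length $q+2$: this is only possible when $q$ is even and the dimension is $3$ or $q-1$, i.e.\ $k+1\in\{3,q-1\}$, equivalently $k\in\{2,q-2\}$.

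Next I would check that in these two cases such codes actually exist. For $q$ even, the extended Reed-Solomon code (the doubly-extended GRS code, often called a normal rational curve code) $[q+2,3,q]_q$ exists, and its dual $[q+2,q-1,4]_q$ is also MDS; both are classical and standard (see~\cite{HP03}). With $k=2$ we get $d=q+1-k=q-1$, matching the $[q+2,3,q]_q$ code; with $k=q-2$ we get $d=q+1-k=3$, matching the $[q+2,q-1,4]_q$ code. Feeding these into Proposition~\ref{prop:pns} yields pure CSS AQMDS codes $[[q+1,1,d_z/d_x]]_q$ with $\{d_z,d_x\}=\{q+1-k,k+1\}=\{q-1,3\}$ in both cases. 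Conversely, the MDS Conjecture argument above shows no other $k$ works, and in particular for $q$ odd no nontrivial $[q+2,k+1,q+1-k]_q$-MDS code exists at all, so there is no pure CSS AQMDS code $[[q+1,1,d_z/d_x]]_q$ when $q$ is odd.

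The main obstacle is purely bookkeeping: making sure the trivial MDS codes are correctly excluded (so that the MDS Conjecture is genuinely invoked), and verifying that the parameter translation $k\leftrightarrow k+1$, $d\leftrightarrow q+1-k$ between Proposition~\ref{prop:pns} and the desired quantum parameters is done without an off-by-one slip, so that both admissible dimensions $3$ and $q-1$ at length $q+2$ land on the same pair $\{d_z,d_x\}=\{3,q-1\}$. There is no deep difficulty beyond this, since existence of the extended GRS codes and their duals for even $q$ is classical.
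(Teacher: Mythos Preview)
Your approach is essentially the paper's: reduce via Proposition~\ref{prop:pns} with $n=q+1$ to the existence of a classical $[q+2,k+1]_q$-MDS code and invoke the MDS Conjecture. The existence direction (using the well-known $[q+2,3,q]_q$ code for $q$ even) also matches.

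There is one bookkeeping slip worth fixing. You claim that $k+1\in\{1,2,q+1,q+2\}$ would correspond to a \emph{trivial} MDS code and hence can be discarded. But the trivial MDS codes at length $q+2$ have dimensions $1$, $q+1$, and $q+2$ only; a putative $[q+2,2,q+1]_q$ code (i.e.\ $k=1$, giving $\{d_z,d_x\}=\{q,2\}$) is \emph{nontrivial}, and your constraint analysis does not actually exclude $k=1$. So as written you have not ruled out $\{d_z,d_x\}=\{q,2\}$. The fix is immediate: treat $k+1=2$ (and, by the same token, its dual $k+1=q$) as nontrivial and feed it into the MDS Conjecture, which forbids it since $2\notin\{3,q-1\}$ for $q\ge 4$ (and $q=3$ is odd). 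This is exactly the step the paper carries out when it handles the case $d_x=2$ separately.

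A minor arithmetic note: the classical code furnished by Proposition~\ref{prop:pns} has parameters $[q+2,k+1,q-k+2]_q$, not $[q+2,k+1,q+1-k]_q$; since the distance is forced by length and dimension this does not affect the argument, but it explains the apparent mismatch you saw when ``matching'' with the $[q+2,3,q]_q$ code.
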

\begin{proof}
There exists a $[2^{m}+2,3,2^{m}]_{2^{m}}$-MDS code (see~\cite[Ch. 11, Th. 10]{MS77}). 
By Proposition~\ref{prop:pns}, an AQMDS code with the indicated parameters exists.

The necessary condition follows from combining the MDS conjecture and Proposition~\ref{prop:pns}. 
Assume that there exists a $[[q+1,1,d_{z}/d_{x}]]_{q}$-AQMDS code $Q$ with $d_{x} \geq 2$. 
If $q$ is odd, the existence of $Q$ would imply the existence of a nontrivial MDS code of 
length $q+2$, contradicting the MDS conjecture. For even $q$, suppose $\{d_{z},d_{x}\} \neq \{q-1,3\}$. Without loss of generality,
assume $d_{z} \geq d_{x} \neq 3$. Then there exists a nested pair $[q+1,q+1-d_{x},d_{x}+1]_{q} 
\subset [q+1,q+2-d_{x},d_{x}]_{q}$. By Proposition~\ref{prop:pns}, there exists a 
$[q+2,q+2-d_{x},d_{x}+1]_{q}$-MDS code. If $d_{x}=2$, then $q+2-d_{x}=q \notin \{3,q-1\}$, 
contradicting the MDS conjecture. If $d_{x} > 3$, then $d_{z} < q-1$ and $3 < q+2-d_{z} \leq q+2-d_{x} < q-1$, 
a contradiction to the MDS conjecture.
\end{proof}

\section{AQMDS Codes of Length $n=2^{m}+2 \geq 6$ with $d_{z}=d_{x}=4$}\label{sec:n=q+2}
MDS codes of length $q+2$ are known to exist for $q=2^m$, 
and $k\in\{3,2^m-1\}$ \cite[Ch. 11, Th. 10]{MS77}.
Let $v_1,v_2,\ldots,v_{q+2}$ be nonzero elements in ${\F}_q$ and fix $\alpha_q=0$ in the notations of Section~\ref{sec:GRSuptoq}.

For $m\ge 2$, a generator matrix for $k=3$ or a parity check matrix for $k=2^{m}-1$ is given by
\begin{equation}\label{eq:paritycheck}
H=\left(
\begin{array}{cccccc}
v_1 			     		& \cdots        	&	v_{q-1}				& v_q  	& 0 			& 0\\
v_1\alpha_1 	        	& \cdots 		&	v_{q-1}\alpha_{q-1}		& 0 		& v_{q+1} 	& 0\\
v_1\alpha_1^2 	        	& \cdots 		&	v_{q-1}\alpha_{q-1}^2	& 0 		& 0 			& v_{q+2}
\end{array}
\right).
\end{equation}

Let $C$ be a $[2^{m}+2,2^{m}-1,4]_{2^{m}}$-code with parity check matrix $H$ 
given in (\ref{eq:paritycheck}). Let $D$ be the $[2^{m}+2,3,2^{m}]_{2^{m}}$-code 
whose generator matrix $G$ is given by
\begin{equation}\label{eq:genmatrix}
G=\left(
\begin{array}{cccccc}
v_1^{-1} 				       			& \cdots         &	v_{q-1}^{-1}				& v_q^{-1}  	& 0 			& 0\\
v_1^{-1}\alpha_1^{-1} 	            	& \cdots 		&	v_{q-1}^{-1}\alpha_{q-1}^{-1}	& 0 		& v_{q+1}^{-1} 	& 0\\
v_1^{-1}\alpha_1^{-2} 	        		& \cdots 		&	v_{q-1}^{-1}\alpha_{q-1}^{-2}	& 0 		& 0 			& v_{q+2}^{-1}
\end{array}
\right)\text{.}
\end{equation}
The following theorem gives a construction of an AQMDS code of length $q+2$.

\begin{theorem}\label{thm:n=q+2dx4}
Let $q=2^{m} \geq 4$. Then there exists an AQMDS code with 
parameters $[[2^{m}+2,2^{m}-4,4/4]]_{2^{m}}$.
\end{theorem}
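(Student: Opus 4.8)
The plan is to apply Theorem~\ref{thm:main} to the nested pair $C_1^{\perp}\subseteq C_2$ where we take $C_2=C$, the $[2^m+2,2^m-1,4]_{2^m}$-code with parity check matrix $H$ of (\ref{eq:paritycheck}), and $C_1^{\perp}=D$, the $[2^m+2,3,2^m]_{2^m}$-code with generator matrix $G$ of (\ref{eq:genmatrix}). The first task is to verify the containment $D\subseteq C$, i.e. $C_1^{\perp}\subseteq C_2$, which is the hypothesis needed to invoke the CSS construction. Since $C$ has parity check matrix $H$, it suffices to show $H G^{\mathrm{T}}=0$; that is, every row of $G$ is orthogonal to every row of $H$. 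Writing out the inner product of row $a$ of $H$ and row $b$ of $G$ (for $a,b\in\{0,1,2\}$, indexing the three rows by the exponent used), the contributions from the last three coordinates vanish unless $a=b$, and the first $q-1=2^m-1$ coordinates contribute $\sum_{i=1}^{q-1}\alpha_i^{a}\alpha_i^{-b}=\sum_{i=1}^{q-1}\alpha_i^{a-b}$, a power-sum over the $2^m-1$ nonzero elements of $\F_{2^m}$ (recall $\alpha_q=0$ was fixed, so $\alpha_1,\dots,\alpha_{q-1}$ are exactly the nonzero field elements). One then uses the standard fact that $\sum_{\beta\in\F_{2^m}^{*}}\beta^{t}=-1=1$ when $(2^m-1)\mid t$ and $=0$ otherwise; since $|a-b|\le 2 < 2^m-1$ for $m\ge 2$, the sum is $0$ when $a\ne b$ and $1$ (i.e. $q-1$ times the all-ones, reduced mod $2$) when $a=b$, at which point the diagonal last-coordinate term $v_{q+1+?}\cdot v_{q+1+?}^{-1}=1$ must be checked to cancel it in characteristic $2$. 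Pinning down this last cancellation — matching the "$a=b$" power-sum value against the diagonal term of the appended block, and confirming it is exactly killed mod $2$ — is the one place that needs genuine care; I expect it to be the main (though mild) obstacle.

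Once $D\subseteq C$ is established, Lemma~\ref{lem:AQMDS} does the rest: both $C_1^{\perp}=D$ (parameters $[2^m+2,3,2^m]_{2^m}$) and $C_2=C$ (parameters $[2^m+2,2^m-1,4]_{2^m}$) are classical MDS codes by their stated parameters (these are the well-known length-$q+2$ MDS codes from \cite[Ch. 11, Th. 10]{MS77}), so the resulting pure CSS AQC is AQMDS. It remains only to read off the parameters from Theorem~\ref{thm:main}: the length is $n=2^m+2$; the dimension is $k_1+k_2-n$, where $k_2=\dim C=2^m-1$ and $k_1=\dim C_1=n-\dim C_1^{\perp}=(2^m+2)-3=2^m-1$, giving $k_1+k_2-n=(2^m-1)+(2^m-1)-(2^m+2)=2^m-4$; and purity gives $\{d_z,d_x\}=\{d_1,d_2\}=\{4,2^m\}$...

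\begin{remark}
Here I must be slightly more careful about which distances appear. By purity $\{d_z,d_x\}=\{\wt(C_2\setminus C_1^{\perp}),\wt(C_1\setminus C_2^{\perp})\}$, and we want this to equal $\{4,4\}$, not $\{4,2^m\}$. So the claim is \emph{not} simply "pure with $\{d_1,d_2\}$"; rather, one computes $\wt(C_2\setminus C_1^{\perp})$ and $\wt(C_1\setminus C_2^{\perp})$ directly. Since $C_1^{\perp}=D\subsetneq C_2=C$ and $d(C_2)=4$, any minimum-weight word of $C_2$ lying outside $C_1^{\perp}$ gives $\wt(C_2\setminus C_1^{\perp})=4$; one checks $D$ contains no weight-$4$ word (it has minimum distance $2^m\ge 4$, with equality only when $2^m=4$, so for $m=2$ a small separate check is needed that some weight-$4$ word of $C$ avoids $D$). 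Dually, $C_2^{\perp}=C^{\perp}=D^{?}$... the dual of $C$ is a $[2^m+2,3,2^m]$-code and $C_1=D^{\perp}$ is a $[2^m+2,2^m-1,4]$-code containing $C_2^{\perp}$, so symmetrically $\wt(C_1\setminus C_2^{\perp})=4$. Hence $d_z=d_x=4$ and the quantum Singleton bound $k\le n-d_x-d_z+2=(2^m+2)-4-4+2=2^m-4$ is met with equality, confirming the AQMDS property independently.
\end{remark}

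In summary, the proof is: (i) show $H G^{\mathrm{T}}=0$ via field power-sums over $\F_{2^m}^{*}$, yielding $D\subseteq C$; (ii) invoke Theorem~\ref{thm:main} with $C_1^{\perp}=D$, $C_2=C$ to get a pure CSS AQC of length $2^m+2$ and dimension $2^m-4$; (iii) verify $\wt(C_2\setminus C_1^{\perp})=\wt(C_1\setminus C_2^{\perp})=4$ — using $d(C)=4$, $d(C^{\perp})=2^m$, $d(D)=2^m$, $d(D^{\perp})=4$, plus a direct check of the borderline case $m=2$ — so that $d_z=d_x=4$; (iv) conclude AQMDS from Lemma~\ref{lem:AQMDS} (or equivalently from attaining (\ref{eq:QSB})). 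I anticipate step (i)'s characteristic-$2$ bookkeeping and the $m=2$ borderline check in step (iii) to be the only non-mechanical points.
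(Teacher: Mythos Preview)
Your approach is essentially the paper's: show $GH^{\mathrm T}=\0$ via power sums over ${\F}_{2^m}^{*}$ to get $D\subset C$, then apply Theorem~\ref{thm:main} with $C_1=D^{\perp}$, $C_2=C$. The only wrinkle is your detour in the remark, which stems from misreading $d_1$: in Theorem~\ref{thm:main}, $d_i$ is the minimum distance of $C_i$, so here $d_1=d(C_1)=d(D^{\perp})=4$ (the dual of a $[2^m+2,3,2^m]$ MDS code is $[2^m+2,2^m-1,4]$), not $d(D)=2^m$. Thus $\{d_1,d_2\}=\{4,4\}$ outright, and purity (automatic once the inclusion is strict, since then minimum-weight words of $C_2$ cannot lie in the higher-distance $C_1^{\perp}$) gives $d_z=d_x=4$ without the extra argument. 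Your $m=2$ worry dissolves for the same reason: there $D=C$, the quantum dimension is $2^m-4=0$, and the code is pure by the $K=1$ convention of Proposition~\ref{prop:kis0}.
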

\begin{proof}
First we prove that $D \subset C$ by showing that 
$M=(m_{i,j}):=GH^{T}=\0$. Note that
\begin{equation*}
m_{i,j}=\sum_{l=1}^{q+2}g_{i,l} \cdot h_{j,l}
\end{equation*}
for $1 \leq i,j \leq 3$. If $i=j$, then $m_{i,j}=q=0$. If $i \neq j$, 
the desired conclusion follows since
\begin{equation*}
\sum_{i=1}^{q-1} \alpha_i  =\sum_{i=1}^{q-1} \alpha_i^{-1} =0 \text{ and } 
\sum_{i=1}^{q-1} \alpha_i^{-2}=\sum_{i=1}^{q-1} \alpha_i^2 =\left(\sum_{i=1}^{q-1} \alpha_i\right)^2 =0\text{.}
\end{equation*}
Applying Theorem~\ref{thm:main} with $C_{1}=D^{\perp}$ 
and $C_{2}=C$ completes the proof.
\end{proof}

\section{AQMDS Codes with $d_{z} \geq d_{x}=2$, an Alternative Look}\label{sec:dxequals2}
In the previous sections, suitable pairs of GRS or extended GRS codes were chosen for the CSS construction. 
This section singles out the case of $d_{x}=2$ where the 
particular type of the MDS code chosen is inessential. The following theorem gives a construction on an AQC with $d_x=2$.

\begin{theorem}[{\cite[Th. 7]{EJL2011}}]\label{thm:spectrumn}
Let $C$ be a linear (not necessarily MDS) $[n,k,d]_{q}$-code with $k\ge 2$. If $C$ has a 
codeword $\u$ such that $\wt(\u)=n$, then there exists an $[[n,k-1,d/2]]_{q}$-AQC.
\end{theorem}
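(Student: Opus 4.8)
The plan is to build the AQC directly from Theorem~\ref{thm:main} by producing a suitable nested pair of classical codes $C_1^\perp \subseteq C_2$ whose associated distances are $d$ and $2$. The natural choice is to take $C_2 = C$, the given $[n,k,d]_q$-code containing a full-weight codeword $\u$, and to arrange $C_1^\perp$ to be a suitable subcode so that the CSS parameters come out as claimed. Since we want $k_1 + k_2 - n = k - 1$ and $k_2 = k$, we need $k_1 = n-1$; that is, $C_1$ should be an $[n, n-1]_q$-code, hence $C_1^\perp$ should be one-dimensional. The most convenient one-dimensional code to use is $C_1^\perp = \langle \u \rangle$, the span of the full-weight codeword, which is automatically contained in $C_2 = C$. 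Then $C_1 = \langle \u \rangle^\perp$, an $[n,n-1,\,?\,]_q$-code.

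With this choice in hand, the key step is to compute $\wt(C_2 \setminus C_1^\perp)$ and $\wt(C_1 \setminus C_2^\perp)$ and show that their min and max are $2$ and $d$ (in some order). First I would handle $\wt(C_1 \setminus C_2^\perp)$: here $C_1 = \langle \u \rangle^\perp$ has minimum distance $2$ (a full-weight vector's dual contains weight-$2$ words, e.g. differences of scaled coordinate vectors, provided $n \geq 2$, and cannot contain weight-$1$ words since $\u$ has all coordinates nonzero), so $\wt(C_1 \setminus C_2^\perp) \le \wt(C_1) = 2$ as long as some weight-$2$ word of $C_1$ lies outside $C_2^\perp$; if $C_2^\perp = C^\perp$ happened to contain \emph{all} weight-$2$ words of $C_1$ this could fail, so I would need a short argument that it does not — but note $\dim C^\perp = n-k \le n-2 < n-1 = \dim C_1$, and more to the point one can exhibit an explicit weight-$2$ vector in $C_1 \setminus C^\perp$ by a pigeonhole/counting argument, or simply observe that if every weight-$2$ vector of $C_1$ were in $C^\perp$ then $C^\perp$ would contain $C_1$ (weight-$2$ words of $C_1$ span it when $n \ge 3$), forcing $C \subseteq C_1^{\perp\perp}=\langle\u\rangle$, contradicting $k \ge 2$. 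Second, for $\wt(C_2 \setminus C_1^\perp) = \wt(C \setminus \langle\u\rangle)$: every nonzero word of $C$ has weight $\ge d$, and $\langle\u\rangle$ consists of scalar multiples of $\u$, all of weight exactly $n \ge d$; I would argue $\wt(C\setminus\langle\u\rangle) = d$ by noting a minimum-weight codeword of $C$ has weight $d < n$ (if $d = n$ we need a separate trivial check, but then the claim $d_z = d$, $d_x = 2$ still must be verified — actually if $d=n$ every nonzero word has weight $n$, so $C \setminus \langle \u\rangle$ is nonempty only if $k \ge 2$, which holds, and those words have weight $n = d$), so $\wt(C \setminus \langle\u\rangle) = d$ in all cases. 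Thus $d_z = \max\{d,2\} = d$ and $d_x = \min\{d, 2\} = 2$, and Theorem~\ref{thm:main} delivers an $[[n, k-1, d/2]]_q$-AQC.

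The main obstacle I anticipate is the boundary/degenerate case analysis: making sure the weight computations are valid when $d$ is small (e.g.\ $d = 2$, where $d_z = d_x = 2$ and one must check the "max/min" bookkeeping is consistent) and when $d = n$ (where $C$ has no codeword of weight strictly between $d$ and $n$, so one cannot pick a minimum-weight word distinct in weight from $\u$, and must instead rely on $\dim C \ge 2$ to guarantee $C \setminus \langle \u \rangle \ne \emptyset$). A secondary subtlety is confirming $\wt(C_1 \setminus C_2^\perp) = 2$ rather than something larger: this needs the observation, spelled out above, that not all minimum-weight (weight-$2$) codewords of $\langle \u\rangle^\perp$ can lie in $C^\perp$, which follows from $\dim C^\perp < n-1$ together with the fact that weight-$2$ words generate $\langle\u\rangle^\perp$ when $n\ge 3$; for $n = 2$ one checks the tiny cases by hand. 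Once these edge cases are dispatched, the argument is a direct application of the CSS construction with no further machinery required.
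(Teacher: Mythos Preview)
The paper does not give its own proof of this theorem; it merely quotes the result from \cite{EJL2011}. Your argument---taking $C_2 = C$ and $C_1^\perp = \langle \u\rangle$, then applying Theorem~\ref{thm:main}---is correct and is precisely the natural construction one expects underlies the cited result. One small simplification: the weight-$2$ vectors $u_j e_1 - u_1 e_j$ (for $j=2,\ldots,n$) already span $\langle\u\rangle^\perp$ for every $n\ge 2$, so your separate ``$n=2$ by hand'' caveat is unnecessary, and the implication ``all weight-$2$ words of $\langle\u\rangle^\perp$ lie in $C^\perp$ $\Rightarrow$ $C\subseteq\langle\u\rangle$'' goes through uniformly.
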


Let $C$ be an $[n,k,n-k+1]_{q}$-MDS code. Ezerman \etal~\cite{EGS09} showed that 
$C$ has a codeword $\u$ with $\wt(\u)=n$, except when either $C$ is the dual of the binary 
repetition code of odd length $n\ge 3$, or $C$ is a simplex code with parameters $[q+1,2,q]_q$. 
Hence, the following corollary can be derived.

\begin{corollary}\label{cor:MDS}
The following statements hold:
\begin{enumerate}
	\item For even integers $n$, there exists an $[[n,n-2,2/2]]_{2}$-AQMDS code.
	\item For positive integers $n,q \geq 3$, there exists an $[[n,n-2,2/2]]_{q}$-AQMDS code.
	\item Given positive integers $q \geq n \ge 4$, there exists an AQMDS code for $2 \le k \leq n-2$ 
	with parameters $[[n,k-1,d_{z}/2]]_{q}$ with $d_{z}=n-k+1$.
	\item Given $q \geq 4$, there exists an AQMDS code for $3 \leq k \leq q-1$ with parameters 
	$[[q+1,k-1,d_{z}/2]]_{q}$ with $d_{z}=q-k+2$.
	\item Given positive integer $m\ge 2$ and $q=2^{m}$, there exists an AQMDS code with parameters $[[2^{m}+2,2,2^{m}/2]]_{2^{m}}$ and an AQMDS code with parameters $[[2^{m}+2,2^{m}-2,4/2]]_{2^{m}}$.
\end{enumerate}
\end{corollary}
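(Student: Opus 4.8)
The plan is to derive each of the five items from Theorem~\ref{thm:spectrumn} together with the result of Ezerman \etal~\cite{EGS09} quoted just above the corollary, applying Lemma~\ref{lem:AQMDS} (via the underlying nested pair $C^\perp \subset C$) to check that the resulting AQC is genuinely AQMDS. The uniform strategy is: pick an MDS code $C$ with parameters $[n,k,n-k+1]_q$ that possesses a full-weight codeword $\u$ with $\wt(\u)=n$; Theorem~\ref{thm:spectrumn} then produces an $[[n,k-1,d/2]]_q$-AQC with $d=n-k+1$; and since this AQC is the CSS code attached to the nested pair $C^\perp \subset C$ with both $C$ and $C^\perp$ MDS, Lemma~\ref{lem:AQMDS} certifies it is AQMDS. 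Thus the bulk of the work is (a) exhibiting the right MDS code in each regime and (b) confirming that the exceptional cases of \cite{EGS09} (dual of the binary repetition code of odd length, and the $[q+1,2,q]_q$ simplex code) are avoided, or handled separately.

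I would proceed item by item. For (1), take $C$ to be any $[n,2,n-1]_2$-MDS code with $n$ even; the two exceptional families require $n$ odd or $k=2$ with $n=q+1=3$, so for even $n\ge 2$ one checks a full-weight codeword exists, giving $[[n,n-2,2/2]]_2$. (Here one should note $k-1 = 1$, and confirm purity/MDS-ness of the pair directly when $n$ is small.) For (2), over $\F_q$ with $q\ge 3$, take an $[n,2,n-1]_q$-GRS code: when $n\le q$ this is a GRS code covered by Section~\ref{sec:GRSuptoq} and has a full-weight word since it is not a binary-repetition dual; when $n=q+1$ one must avoid the simplex code, which is arranged by a suitable choice of the evaluation vector $\v$ in the extended-GRS construction of Section~\ref{sec:n=q+1}, or simply by invoking that for $q\ge3$ there exists a $[q+1,2,q]_q$-MDS code inequivalent to the simplex one only when—actually here one should lean on \cite{EGS09} directly, which asserts the full-weight word exists for \emph{every} $[n,2,n-1]_q$-MDS code except the simplex code, so a short separate argument for the $n=q+1$, $k=2$ subcase is needed. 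For (3), use the $\GRS_{n,k}(\al,\v)$ code of Section~\ref{sec:GRSuptoq} with $4\le n\le q$ and $2\le k\le n-2$: none of these are exceptional (length $\le q$ rules out the simplex code, $q\ge3$ rules out binary), so Theorem~\ref{thm:spectrumn} yields $[[n,k-1,(n-k+1)/2]]_q$, and Lemma~\ref{lem:AQMDS} gives AQMDS. For (4), use the extended-GRS code $E$ of Section~\ref{sec:n=q+1} with parameters $[q+1,k,q-k+2]_q$ for $3\le k\le q-1$; since $k\ge3$ it is not the simplex code and $q\ge4$ rules out the binary case, so a full-weight word exists and we get $[[q+1,k-1,(q-k+2)/2]]_q$. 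For (5), apply the same reasoning to the two MDS codes of length $q+2$ that exist for $q=2^m$: the $[2^m+2,3,2^m]_{2^m}$ code (here $k=3$, not simplex, giving $[[2^m+2,2,2^m/2]]_{2^m}$) and the $[2^m+2,2^m-1,4]_{2^m}$ code (giving $[[2^m+2,2^m-2,4/2]]_{2^m}$); in each case one must check a full-weight codeword exists—for the length-$(q+2)$ codes this is not covered verbatim by \cite{EGS09}, so a direct check on the generator/parity-check matrices \eqref{eq:genmatrix} and \eqref{eq:paritycheck} is required.

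The main obstacle I anticipate is not any single hard computation but the bookkeeping around the two exceptional families in \cite{EGS09}: for item (2) at length $n=q+1$, $k=2$, and for item (5) at length $q+2$, one cannot simply cite the generic statement and must verify the existence of a weight-$n$ codeword by hand from the explicit GRS/extended-GRS generator matrices. For the length-$(q+2)$ codes this amounts to showing that some $\F_q$-linear combination of the rows of \eqref{eq:genmatrix} (respectively, some coset representative modulo the row space of \eqref{eq:paritycheck}) has no zero coordinate—a finite check that uses $q=2^m$ and the choice $\alpha_q=0$—but it does need to be written out carefully. Everything else is a routine matter of selecting the MDS code already constructed in Sections~\ref{sec:GRSuptoq}--\ref{sec:n=q+2}, quoting Theorem~\ref{thm:spectrumn}, and invoking Lemma~\ref{lem:AQMDS}.
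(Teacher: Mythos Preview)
Your overall strategy---apply Theorem~\ref{thm:spectrumn} to an MDS code with a full-weight codeword, using \cite{EGS09} to guarantee such a codeword---is exactly what the paper intends, and for items (3), (4), (5) your choice of code is correct. However, in items (1) and (2) you have picked the wrong MDS code. Theorem~\ref{thm:spectrumn} turns an $[n,k,d]_q$-code into an $[[n,k-1,d/2]]_q$-AQC, so to obtain $[[n,n-2,2/2]]_q$ you need $k=n-1$ and $d=2$, i.e.\ the single-parity-check code $[n,n-1,2]_q$, \emph{not} an $[n,2,n-1]_q$-code. Over $\F_2$ the latter does not even exist for $n\ge 4$, and in any case it would yield $[[n,1,(n-1)/2]]_q$, the wrong parameters. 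With the correct choice $C=[n,n-1,2]_q$ the argument is immediate: for $q=2$ the only exception in \cite{EGS09} is precisely the odd-$n$ parity-check code, so even $n$ gives a full-weight word; for $q\ge 3$ neither exception applies (not binary, and $k=n-1=2$ would force $n=3\neq q+1$). This also dissolves your worry about the $n=q+1$, $k=2$ simplex case in item~(2): it simply does not arise.

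Your concern in item~(5) is likewise unnecessary. The result quoted from \cite{EGS09} applies to \emph{every} linear MDS code, with no restriction on length; the only exclusions are the binary odd-length parity-check code and the $[q+1,2,q]_q$ simplex code. The length-$(q+2)$ codes with $k\in\{3,2^m-1\}$ and $q=2^m\ge 4$ are neither, so a full-weight codeword is guaranteed and no hand verification on \eqref{eq:paritycheck} or \eqref{eq:genmatrix} is needed.
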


Wang \etal~\cite[Cor. 3.4]{WFLX09} gave a different proof of the existence of $[[n,n-2,2/2]]_{q}$-AQMDS codes $Q$ for $n,q \geq 3$.

In this section, it is shown for $d_x=2$ that the specific construction of the classical MDS codes used in the CSS 
construction is inconsequential. This is useful as there are many classical MDS codes which are not equivalent to 
the GRS codes (see~\cite{RL89}, for instance). 

\section{Summary}\label{sec:summary}
While the ingredients to construct a pure AQC under the CSS construction, namely a pair of nested codes, the knowledge 
on the codimension and the dual distances of the codes, are all classical, computing the exact set of parameters 
and establishing the optimality of the resulting AQC are by no means trivial. 

This work shows how to utilize the wealth of knowledge available regarding classical MDS codes to completely classify 
under which conditions there exists a particular pure CSS AQMDS code and how to construct such a code explicitly. 
Outside the MDS framework, more work needs to be done in determining the exact values of $d_x$ and $d_z$ and in establishing
optimality.

We summarize the results of the paper in the following theorem.

\begin{theorem}\label{thm:summary}
Let $q$ be a prime power, $n, k$ be positive integers and $j$ be a nonnegative integer. Assuming the validity 
of the MDS conjecture, there exists a pure CSS AQMDS code with parameters $[[n,j,d_z/d_x]]_q$, where
$\{d_z,d_x \}=\{n-k-j+1,k+1\}$ if and only 
if one of the following holds:

\begin{enumerate}
\item\emph{[Prop.~\ref{prop:dxis1}, Prop.~\ref{prop:kis0}]} $q$ is arbitrary, $n\ge 2$, $k\in\{1,n-1\}$, and $j\in\{0,n-k\}$.
\item\emph{[Cor.~\ref{cor:MDS}] }$q=2$, $n$ is even, $k=1$, and $j=n-2$.
\item\emph{[Cor.~\ref{cor:MDS}]} $q\ge 3$, $n\ge 2$, $k=1$, and $j=n-2$.
\item\emph{[Prop.~\ref{prop:dxis1}, Prop.~\ref{prop:kis0}, Th.~\ref{thm:nestedGRS}]} $q\ge 3$, $2\le n\le q$, $k \le n-1$, and $0\le j\le n-k$.
\item\emph{[Prop.~\ref{prop:dxis1}, Prop.~\ref{prop:kis0}, Th.~\ref{thm:extGRS}]} $q\ge 3$, $n=q+1$,  $k \le n-1$, and $j\in\{0,2,\ldots,n-k\}$.
\item\emph{[Cor.~\ref{cor:nonexist}]} $q=2^m$, $n=q+1$, $j=1$, and $k\in \{2,2^{m}-2 \}$.

\item\emph{[Prop.~\ref{prop:dxis1}, Prop.~\ref{prop:kis0}, Th.~\ref{thm:n=q+2dx4}, Cor.~\ref{cor:MDS}]} $q=2^m$ where $m\ge 2$, $n=q+2$,  
$$ \left\{
\begin{array}{l} k=1,\mbox{ and }j\in\{2,2^m-2\},\\
k=3,\mbox{ and }j\in\{0,2^m-4,2^m-1\},\mbox{ or,}\\
k=2^m-1,\mbox{ and }j\in\{0,3\}.\end{array} \right. $$
\end{enumerate}
\end{theorem}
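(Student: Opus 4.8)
The plan is to assemble Theorem~\ref{thm:summary} entirely from the constructive results and the single non-existence result (Corollary~\ref{cor:nonexist}) already established, organizing the argument by the length $n$ since the MDS Conjecture partitions the problem into the ranges $n\le q$, $n=q+1$, and $n=q+2$ (Proposition~\ref{prop:MDSlength}). First I would handle the \emph{only if} direction: given a nontrivial pure CSS AQMDS code, Lemma~\ref{lem:AQMDS} forces $C_1^\perp$ and $C_2$ to be nested MDS codes, so by the MDS Conjecture (and its $q$ even, $k\in\{3,q-1\}$ exception) the length is at most $q+1$ for odd $q$ and at most $q+2$ for even $q$; the trivial cases ($k\in\{1,n-1\}$, $j\in\{0,n-k\}$) are exactly item~(1). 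For each remaining length I would then show that the parameter triples realizable by nested MDS pairs are precisely those listed, using Proposition~\ref{prop:pns} to translate the $j=1$ (codimension-one nesting) sub-case into the existence of a length-$(n+1)$ MDS code and Proposition~\ref{prop:dxis1}/Proposition~\ref{prop:kis0} to dispose of $d_x=1$ and $k=0$.

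Next I would verify the \emph{if} direction case by case, which is largely bookkeeping: item~(1) is Propositions~\ref{prop:dxis1} and~\ref{prop:kis0}; items~(2) and~(3) are the $k=1$ instances of Corollary~\ref{cor:MDS}(1)--(2); item~(4) for $n\le q$ combines Theorem~\ref{thm:nestedGRS} (which gives $j\ge1$, $\{d_z,d_x\}=\{n-k-j+1,k+1\}$) with the boundary cases $j=0$ and $j=n-k$ supplied by Propositions~\ref{prop:dxis1}/\ref{prop:kis0}; item~(5) for $n=q+1$ combines Theorem~\ref{thm:extGRS} (which gives $j\ge2$) again with the $j=0,\,n-k$ boundary cases, noting that $j=1$ is deliberately excluded here and recovered separately; item~(6) is exactly Corollary~\ref{cor:nonexist} (the $j=1$, $n=q+1$, $q$ even case, rewritten via $k\in\{2,2^m-2\}$); and item~(7) assembles the $n=q+2$ codes from Theorem~\ref{thm:n=q+2dx4} (the $[[2^m+2,2^m-4,4/4]]$ code, i.e. $k=3$, $j=2^m-4$), from Corollary~\ref{cor:MDS}(5) (the $k=1$ cases $j\in\{2,2^m-2\}$ and the $k=3$, $j=2^m-1$ and $k=2^m-1$, $j=3$ cases, where $d_x=2$), and from Propositions~\ref{prop:dxis1}/\ref{prop:kis0} applied to the length-$(q+2)$ MDS codes with $k\in\{3,2^m-1\}$ for the boundary values $j=0$.

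The main obstacle will be the \emph{only if} direction for $n=q+1$ and $n=q+2$, i.e. proving that no parameter triples \emph{other} than those listed can occur. For $n=q+1$ with $q$ odd and $j=1$, one must argue as in the proof of Corollary~\ref{cor:nonexist} that a codimension-one nesting would force a nontrivial length-$(q+2)$ MDS code, contradicting the conjecture; for $q$ even the same corollary pins down $\{d_z,d_x\}=\{3,q-1\}$. For $n=q+2$ the delicate point is that the conjecture only permits $k\in\{3,q-1\}$, so the nested pair $C_1^\perp\subset C_2$ must have \emph{both} members with dimension in $\{3,q-1\}$; since these are nested and distinct, the only options are $\dim C_1^\perp=3$, $\dim C_2\in\{3,q-1\}$ or $\dim C_1^\perp=q-1=\dim C_2$ is impossible unless equal, so one is forced into dimensions $(3,q-1)$ giving $j=\dim C_1^\perp+\dim C_2-n = 3+(q-1)-(q+2)=0$... here one must be careful, and also use that $d_x\ge2$ rules out $C_2=\F_q^n$, to extract exactly the list in item~(7). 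I would also need to double-check that for $n=q+1$ the extended-GRS construction of Theorem~\ref{thm:extGRS} together with the boundary cases truly covers all admissible $j$ when $q$ is odd, i.e. that every MDS code of length $q+1$ arises (or at least that the parameter set is complete), which again rests on the MDS Conjecture rather than on exhibiting a canonical code. Once these completeness arguments are in place, the theorem follows by collecting the cases.
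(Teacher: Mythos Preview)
Your plan matches the paper's implicit argument: Theorem~\ref{thm:summary} is presented as a summary, and its ``proof'' is precisely the bracketed references to the earlier results, so the bookkeeping you outline is what is required. Two small corrections in the sufficiency direction for item~(7): the sub-cases $k=3$, $j=2^m-1$ and $k=2^m-1$, $j=3$ yield $\{d_z,d_x\}=\{4,1\}$ and $\{q,1\}$ respectively (compute $n-k-j+1=1$ in both), so $d_x=1$ and these come from Proposition~\ref{prop:dxis1} applied to the length-$(q+2)$ MDS codes, not from Corollary~\ref{cor:MDS}(5).

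Your sketch of necessity at $n=q+2$ has two slips that would derail the enumeration. First, the formula $j=\dim C_1^\perp+\dim C_2-n$ is wrong: since $j=k_1+k_2-n$ and $k_1=n-\dim C_1^\perp$, in fact $j=\dim C_2-\dim C_1^\perp$; for $(\dim C_1^\perp,\dim C_2)=(3,q-1)$ this gives $j=q-4$ (the $4/4$ code), not $j=0$. Second, you cannot restrict the dimensions of $C_1^\perp$ and $C_2$ to $\{3,q-1\}$: the trivial MDS codes of dimensions $1$, $q+1$, and $q+2$ exist at every length, so the admissible dimension set is $\{1,3,q-1,q+1,q+2\}$. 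Enumerating all pairs $a\le b$ from this set (and noting that $(a,b)$ and $(n-b,n-a)$ give the same unordered $\{d_z,d_x\}$, hence the same code up to the symmetry $k\leftrightarrow n-k-j$) recovers exactly the $(k,j)$ appearing in items~(1), (3), and~(7); in particular $(a,b)=(1,q+1)$ contributes the $[[q+2,q,2/2]]$ code of item~(3), which your restricted list would miss. With these two fixes the case analysis closes and your argument goes through.
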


As a concluding remark, note that all AQMDS codes constructed here are pure CSS codes. The existence of a 
degenerate CSS AQMDS code or an AQMDS code derived from non-CSS method with parameters different from those 
in Theorem~\ref{thm:summary} remains an open question.

\section*{Acknowledgments}
The authors thank Markus Grassl for useful discussions and for suggesting Proposition~\ref{prop:pns}. 
The work of S.~Jitman was supported by the Institute for the Promotion of Teaching Science 
and Technology of Thailand. The work of all of the authors is partially supported by Singapore National 
Research Foundation Competitive Research Program Grant NRF-CRP2-2007-03.

\vspace*{-6pt}   

%


\begin{thebibliography}{0}
\vspace*{-5pt}   

\bibitem{AA10}
S.~A.~Aly and A.~Ashikhmin, in {\it Proc. IEEE Inform. Theory Workshop}, Dublin, Ireland (2010), pp.~1--5.

\bibitem{Bal10}
S. Ball, On sets of vectors of a finite vector space in which every subset of basis size is a basis, 
{\it J. Eur. Math. Soc.} {\bf 14} (2012) 733--748.
	
\bibitem{CRSS98} A.~R.~Calderbank {\it et al}., Quantum error correction via codes over $GF(4)$, 
{\it IEEE Trans. Inform. Theory} {\bf 44} (1998) 1369--1387.

\bibitem{ESCH07} Z.~W.~E.~Evans {\it et al}., 
Error correction optimisation in the presence of x/z asymmetry, quant-ph/07093875.

\bibitem{EGS09} M.~F.~Ezerman {\it et al}., The weights in MDS codes,
{\it IEEE Trans. Inform. Theory} {\bf 57} (2011) 392--396.

\bibitem{EJL2011} M.~F.~Ezerman {\it et al}., in {\it Proc. 3rd International Castle Meeting on Coding Theory and Applications (3ICMCTA)}, 
Castell de Cardona Spain, 11--15 September (Servei de Publicacions de la Universitat Aut\`{o}noma de Barcelona, 2011), pp.~99--104.

\bibitem{HP03} W.~C.~Huffman and V.~Pless, {\it Fundamentals of Error-Correcting Codes}
(Cambridge University Press, Cambridge, 2003).

\bibitem{IM07} L.~Ioffe and M.~M.~M{\'e}zard, {\it Phys. Rev. A} {\bf 75} 
(2007) 032345.

\bibitem{LN97} R.~Lidl and H.~Niederreiter, {\it Finite Fields, (Encyclopedia Math. Appl.~20)} 
(Cambridge University Press, Cambridge, 1997).

\bibitem{MS77} F.~J.~MacWilliams and N.~J.~A. Sloane, {\it The Theory of Error-Correcting Codes}	
(North-Holland, Amsterdam, 1977).

\bibitem{RL89} R.~M.~Roth and A.~Lempel, A construction of non-Reed-Solomon type MDS codes, 
{\it IEEE Trans. Inform. Theory} {\bf 35} (1989) 655--657.

\bibitem{SRK09} P.~K.~Sarvepalli {\it et al}., Asymmetric quantum codes: constructions, bounds and performance, 
{\it Proc. R. Soc. A} {\bf 465} (2009) 1645--1672.

\bibitem{Ste96} A.~M. Steane, Multiple particle interference and quantum error correction,
{\it Proc. R. Soc. A} {\bf 452} (1996) 2551--2577.

\bibitem{WFLX09} L. Wang {\it et al}., Asymmetric quantum codes: characterization and constructions, 
{\it IEEE Trans. Inform. Theory} {\bf 56} (2010) 2938--2945.

\end{thebibliography}
\end{document}